\newcommand{\appendixnumberline}[1]{Appendix #1.\space}
\let\oldappendix\appendix
\renewcommand{\appendix}{%
  \addtocontents{toc}{\let\protect\numberline\protect\appendixnumberline}%
  \renewcommand{\@seccntformat}[1]{
  \bfseries Appendix \Alph{section}. }%
  \oldappendix
}
\theoremstyle{plain}
\newtheorem{thm}{Theorem}
\newtheorem*{lem}{Lemma}
\definecolor{navycol}{RGB}{100,150,160}
   \definecolor{pinkcol}{RGB}{242,55,55}
   \definecolor{bluecol}{RGB}{50,205,50}
   \definecolor{bluecol}{RGB}{30,144,255}
   \definecolor{yellowcol}{RGB}{255,252,134}
   \definecolor{lbluecol}{RGB}{214,252,168}
\def\theequation{\arabic{section}.\arabic{equation}}
\newcommand{\be}{\begin{eqnarray}}
\newcommand{\ee}{\end{eqnarray}}
\newcommand{\nn}{\nonumber \\}
\newcommand{\lb}{\label}
\newcommand{\p}[1]{(\ref{#1})}
\newcommand{\ga}{\lower.7ex\hbox{$
\;\stackrel{\textstyle>}{\sim}\;$}}
\newcommand{\vecg}[1]{\mbox{\boldmath $#1$}}
\newcommand{\la}{\lower.7ex\hbox{$
\;\stackrel{\textstyle<}{\sim}\;$}}
\newcommand{\CPn}{\mathbb{CP}^{n-1}}
\newcommand{\CP}{\mathbb{CP}}
\newcommand{\dd}{\partial}
\let\OLDthebibliography\thebibliography
\renewcommand\thebibliography[1]{
  \OLDthebibliography{#1}
  \setlength{\parskip}{4pt}
  \setlength{\itemsep}{0pt plus 0.3ex}
}
\begin{document}

\begin{titlepage}

\vspace*{0.2cm}

\renewcommand{\thefootnote}{\star}
\begin{center}

{\LARGE\bf  Monopole harmonics on $\mathbb{CP}^{n-1}$}\\

\vspace{0.5cm}

\vspace{1.5cm}
\renewcommand{\thefootnote}{$\star$}

\quad {\large\bf Dmitri~Bykov} 
 \vspace{0.5cm}

{\it 
 Steklov Mathematical Institute of Russian Academy of Sciences,\\
Gubkina str. 8, 119991 Moscow, Russia

and 

 Institute for Theoretical and Mathematical Physics,\\
Lomonosov Moscow State University, \\119991 Moscow, Russia}
\\

\vspace{0.1cm}

{\tt bykov@mi-ras.ru, dmitri.v.bykov@gmail.com}\\

\vspace{2mm}

\centerline{and}

\vspace{2mm}

\quad {\large\bf Andrei~Smilga} 
 \vspace{0.5cm}

{\it SUBATECH, Universit\'e de Nantes,}\\
{\it 4 rue Alfred Kastler, BP 20722, Nantes 44307, France;}\\
\vspace{0.1cm}

{\tt smilga@subatech.in2p3.fr}\\

\end{center}
\vspace{0.2cm} \vskip 0.6truecm \nopagebreak

\begin{abstract}
\noindent  
We find the spectra and eigenfunctions of both ordinary and supersymmetric quantum-mechanical   models describing the motion of a charged particle over the $\mathbb{CP}^{n-1}$ manifold in the presence of a background monopole-like gauge field.  The states form  degenerate $SU(n)$ multiplets and their wave functions acquire a very simple form being expressed via homogeneous coordinates. Their relationship to  multidimensional orthogonal polynomials of a special kind is discussed. By the well-known isomorphism between the twisted Dolbeault and Dirac complexes, our construction also gives the eigenfunctions and eigenvalues of the Dirac operator on complex projective spaces in a monopole background.
 \end{abstract}

\vspace{1cm}
\bigskip

\newpage

\end{titlepage}

\tableofcontents

\setcounter{footnote}{0}

\setcounter{equation}0
\section{The Dirac monopole and its $\CPn$ generalization} 
Back in 1931, a famous paper by Dirac appeared where he introduced the notion of magnetic monopole \cite{Dirac}. In the same year, Tamm published a paper where he solved the Schr\"odinger equation for the  Hamiltonian
   $
\,\hat H \ =\  {(\hat {\vecg{p}} - e\vecg{A})^2}/(2m)
  $
describing the motion of a scalar  particle of electric charge $e$ and mass $m$ in the field of a monopole \cite{Tamm}.  

To solve the mathematical problem addressed in this paper, we do not need to keep the physical constants. We will set $m = 1$ and $e = -1$ (reflecting the negative sign of the electron charge) in the following. This gives 
\be
\lb{H-bez-konstant} 
\hat H  \ = \   \frac {(\hat {\vecg{p}} + \vecg{A})^2}{2}\,.
\ee
The field density $\vecg{B}$ of the magnetic monopole is spherically symmetric and the wave functions are expressed in terms of spherical harmonics of a special kind, the {\it monopole harmonics}.
The vector potential $\vecg{A}$ derived by Dirac included a singularity (the {\it Dirac string}). But it was later realized that this field, when projected on $S^2$ (the radial motion decouples), can be described as a topologically nontrivial fiber bundle over the 2-sphere \cite{Wu-Yang}. To this end, one has to introduce two charts such that the fields $A^{(1)}$ and  $A^{(2)}$ in these charts are not singular and are related by a gauge transformation 
\be
\lb{gauge12}
A^{(1)}_\mu = A_\mu^{(2)} + \partial_\mu \chi
  \ee
   in the overlapping region. A necessary requirement is that the element $\omega = e^{i\chi}\in U(1)$  is  uniquely defined. 

For $S^2$, one can choose  the whole sphere except one point (its north pole) as chart 1 and a small neightbourhood of the pole as chart 2. Chart 1 can be parametrized by a complex coordinate\footnote{We mostly use the conventions of Ref. \cite{book}. For complex tensors, we will distinguish between four types of indices: covariant and contravariant, holomorphic and antiholomorphic. The indices are  lowered  and  raised with the metric $ h_{j \bar k}$ and its inverse $h^{\bar k j}$ so that, when an index changes a position, it acquires or loses the bar. But to make the formulas more readable, we will not put the bars over the indices for the complex coordinates $\bar z^j$, derivatives $\bar \partial_j$, gauge fields $\bar A_j$ and momenta
$\bar \pi_j$.}  
$z = (x + iy)/\sqrt{2}$ describing the stereographic projection onto the plane tangent to the south pole. Then the metric of the ``round" sphere reads
 \be
ds^2 \ =\ \frac { 2dz d\bar z} {(1 + z \bar z)^2} \,.
 \ee
It is regular at finite $z$. In the limit $z \to \infty$, we arrive at the north pole of $S^2$. 

Consider the gauge fields $A^{(2)} \approx 0$ and 
\be
\lb{A-CP1}
&&A^{(1)}_z = \frac 1{\sqrt{2}} (A_x - iA_y) = -\frac {iq}2 \frac \partial {\partial z} \ln(1+ z \bar z) \ = \ -\frac {iq \bar z}{2(1+ z \bar z)}\,, \nn
&&A^{(1)}_{\bar z} = \overline{ A^{(1)}_z}  = \frac 1{\sqrt{2}} (A_x + iA_y) \ =\  \frac {iq z}{2(1+ z \bar z)} \,.
  \ee
Being projected onto a sphere centered at the origin, the Hamiltonian \p{H-bez-konstant} acquires the form
\be
\lb{nePauli-proj}
\hat H \ =\ -\frac{(1 + z \bar z)^2}2 \left[(\partial_x + iA_x)^2 + (\partial_y + iA_y)^2  \right] \ =\ - \frac{(1 + z \bar z)^2}2 \left\{\frac \partial {\partial z} +iA_z,  \frac \partial {\partial \bar z} + iA_{\bar z}\right\}_+ .
\ee
For large $|z|$, 
\be
A_x^{(1)} \approx -\frac {qy}{x^2 + y^2}, \qquad A_y^{(1)} \approx \frac {qx}{x^2 + y^2}\,.
\ee
It is singular at the north pole (the circulation $\oint A_\mu dx^\mu$ along the small contour around the north pole does not vanish).
It is related to $A^{(2)} \approx 0$ by a gauge transformation \p{gauge12} with $\chi(x,y) = q \arctan(y/x)$. To ensure that $e^{i\chi(x,y)}$ is uniquely defined, $q$ should be integer.\footnote{One can also consider the spectral problem on $\mathbb{CP}^1$ with one removed point for the Hamiltonian   \p{nePauli-proj} with a fractional magnetic charge \cite{flux}. But  such Hamiltonians cannot be supersymmetrized,  while the main point of interest of this paper are supersymmetric $\mathbb{CP}^{n-1}$  models. So we stick to models with integer topological charges.}   The field \p{A-CP1} coincides with the projection of  Dirac's 3-dimensional field on a sphere centered at the origin. Now $q$ is  nothing but the  magnetic charge of the monopole.

Up to the factor of $q$, the field \p{A-CP1} coincides with the connection in the tautological line bundle 
 \be
A^{\rm taut}_z \ =\ \frac i4  \frac \partial {\partial z} \ln \det h \ =\ - \frac i2   \frac \partial {\partial z} \ln (1 + z \bar z)\,.
\ee

The reader could as well consult Ref.~\cite{Dunne} for a detailed discussion of eigenfunctions on the sphere written in terms of inhomogeneous variables. In this paper, we will mostly use \emph{homogeneous} variables. As we will argue, this simplifies the task considerably.

\vspace{1mm}
\subsection{$\CPn$ monopoles}

When viewed as a complex manifold, the sphere $S^2$ is known as the complex projective space~$\mathbb{CP}^1$. It has natural higher-dimensional generalizations, denoted $\mathbb{CP}^{n-1}$, which are 
defined as the sets of complex $n$-tuples $(w^0, ... , w^{n-1})$  identified under the multiplication by a nonzero complex number $\lambda$, i.e. $(w^0, ... , w^{n-1}) \equiv (\lambda w^0, ... , \lambda w^{n-1})$. They admit topologically nontrivial gauge fields also for $n > 2$.

We choose the 
 $SU(n)$-invariant  metric on $\mathbb{CP}^{n-1}$ (the Fubini-Study metric):
\be
  \lb{metr-homogen}
  ds^2 \ =
   {1\over \mathcal{X}} \left(\sum_{\alpha}^n\, d  w^\alpha d \bar w^\alpha -{1\over \mathcal{X}}\,\big|\sum_{\alpha}^n \bar{w}^\alpha dw^\alpha\big|^2\right) \,,
  \\
  \textrm{where}\quad\quad \mathcal{X}:=\sum_{\alpha=1}^n\,w^\alpha \bar{w}^\alpha\,.
   \ee
The latter shorthand notation will be widely used throughout the paper.

The complex projective space $\mathbb{CP}^{n-1}$ can be covered by $n$ charts, each with topology $\mathbb{C}^{n-1}$.
One of these charts excludes points with $w^0 = 0$. The complex coordinates uniquely describing points in this chart can be chosen as $z^j = w^j/w^0, j = 1, \ldots , n-1$. Then the metric \p{metr-homogen} reduces to $ds^2 = 
2 h_{j \bar k} dz^j d\bar z^{k}$ with
 \be
 \lb{Fubini-Study}
h_{j  \bar k}  \ =\ \partial_j  \bar \partial_{k} \, \ln(1 + z^l \bar z^l) \ =\ 
\frac 1{1 + z^l \bar z^l} \left( \delta_{j\bar k} - \frac {\bar z^j z^k}{1 + z^l \bar z^l}
\right) .
 \ee
Evidently, this metric is K\"ahler.

In analogy with \p{A-CP1},  consider   the  gauge field
\be
\lb{A-CP}
A_j \ = \ -\frac {iq}2 \frac \partial {\partial z^j} \ln(1+ z^l \bar z^l) \ = \ -\frac {iq \bar z^j}{2(1+ z^l \bar z^l)},  \qquad A_{\bar j} \ =\ \overline{A_j}
  \ee
in the chosen chart. One can perform a similar analysis\footnote{See e.g. section 13.3 in the book \cite{book}. Basically, one should require  for the flux of the gauge field through a sphere $\CP^1\subset \CPn$ to be integer:
$${1\over 2\pi}\int\limits_{\CP^1} dA \ =\ q\in \mathbb{Z}\,.
$$} to what we have just done for $\mathbb{CP}^1$ and conclude that $q$ must be integer. We will call the field \p{A-CP} a {\it $\mathbb{CP}$ monopole} of charge $q$. 

The spectrum and eigenfunctions of the Hamiltonian 
\be
\hat H \ =\ - h^{\bar kj} \partial_{\bar k} \partial_j
 \ee
on $\mathbb{CP}^{n-1}$, with the tensor $h^{\bar k j}$ being the inverse Fubini-Study metric\footnote{One can check that the operator 
$h^{\bar k j} \partial_j  \bar \partial_k $  coincides  up to the factor of $2$ with the Laplace-Beltrami operator on $\mathbb{CP}^{n-1}$,
\be
2 h^{\bar k j}  \partial_j  \bar \partial_k = \triangle_{LB} = \frac 1{\det(h)} \partial_j  \det(h)\, h^{\bar k j} \bar\partial_k + 
\frac 1{\det(h)} \bar\partial_k  \det(h)\, h^{\bar k j} \partial_j\,,\quad\quad \det(h) \ =\ \frac 1{(1 + z^l \bar z^l)^n}\,.
\ee}
 [this Hamiltonian  is a multidimensional analog of \p{nePauli-proj} in the absence of the gauge field] were found in~\cite[Chapter 3, C. III
]{BGM}, \cite{Wallach}. In a more general case, when the $\mathbb{CP}$ monopoles \p{A-CP} are present, the problem was solved  by Kuwabara~\cite{Kuwabara}.
In the next section, we will rederive his result using a different method\footnote{There is also a general representation-theoretic method based on the identification of the Laplacian (both with and without monopoles) with a Casimir operator acting on a certain (induced) representation of the symmetry group, cf.~\cite{Stone, Landsman1, Landsman2} and references therein. As compared to those methods, the approach we propose in the present paper is of a more algebraic nature and allows to solve the problem in very explicit terms.}.  The supersymmetric case will be considered in Sect. 3.

\section{Ordinary $\mathbb{CP}^{n-1}$}\label{Ordsec}

\setcounter{equation}0

Our approach 
has its roots in two-dimensional sigma models with target spaces such as $\CPn$, which have recently been reformulated in a more algebraic form by one of the authors~\cite{Bykov1, Bykov2, Bykov3}. Physically, this relation is an equivalence between sigma models and the  
Gross-Neveu models, i.e. models with quartic interactions. Quantum mechanical models considered in the present paper can be seen as dimensional reductions of those sigma models.

\subsection{Hamiltonian formalism}
We start with a mechanical problem of a particle on $\mathbb{CP}^{n-1}$ equipped with the metric~(\ref{metr-homogen}). The relevant Lagrangian is
\be
\lb{L-q0}
L\ =\ \frac{\dot{w}^\alpha \Pi_{\alpha \bar \beta} \dot{\bar{w}}^\beta}{{\cal X}}\,,
 \ee
where ${\cal X} =  w^\alpha \bar w^\alpha$ and  $\Pi_{\alpha \bar \beta} =\delta_{\alpha\bar \beta}  - {\bar{w^\alpha}w^\beta}/{\cal X}$
is the projector on the directions  orthogonal to vector $w^\alpha$,
 \be
\lb{projector}
w^\alpha \Pi_{\alpha\bar \beta} \ =\ \bar w^{\beta}   \Pi_{\alpha\bar\beta} \ =\ 0 \,.
 \ee
 In the chart excluding the points with $w^0 = 0$, this Lagrangian can be rewritten in terms of the variables $z^j = w^j/w^0$ and the metric \p{Fubini-Study}: 
\be 
\lb{L-CP-via-z}
 L \ = \ h_{j\bar k} \dot z^j \dot {\bar z}^{k}\,.
 \ee
But the form \p{L-q0} is more convenient for our purposes. The Lagrangian \p{L-q0} is invariant under the gauge transformations
\be
\lb{gauge-lam}
w^\alpha  \ \to \ \lambda(t) w^\alpha, \quad\quad \lambda(t) \in \mathbb{C}^{\ast}\,.
 \ee

These gauge transformations correspond to the $\mathbb{C}^{\ast}$ quotient entering the definition of the projective space. 

The canonical momenta are
 \be
\lb{pi0}
\pi_\alpha \ =\ \frac{\Pi_{\alpha \bar\beta }\dot {\bar w}^\beta }{{\cal X}}, \qquad \bar \pi_{\alpha} \ =\ \frac{\Pi_{\beta\bar\alpha}\dot {w}^\beta}{{\cal X}}\,.
 \ee
As a consequence of the gauge symmetry \p{gauge-lam}, they obey the constraints
\be
\lb{cons-class}
w^\alpha \pi_\alpha \ = \ \bar w^\alpha \bar\pi_\alpha \ =\ 0 \,.
 \ee
The canonical Hamiltonian is 
\be
\lb{H-class}
H \ = \  {\cal X}  \, \pi_\beta  \bar \pi_{\beta}\,.
 \ee
 
There is an alternative way to see how the constraints~\p{cons-class} emerge. To this end, one can introduce a couple of extra nondynamic variables $A, \bar A$ and express the Lagrangian in the form 
\be
\lb{L-q1}
L' \ =\ \frac{(\dot w^\alpha - A w^\alpha)(\dot {\bar w}^\alpha - \bar A \bar w^{\alpha})}{w^\beta \bar w^\beta} \,.
\ee
The Lagrangian \p{L-q0} is then restored after excluding $A$ and $\bar A$. The constraints 
\p{cons-class} appear from the equations of motion  ${\partial L' \over\partial A}
 = {\partial L' \over \partial \bar A }= 0$.
 
Formulations with auxiliary gauge fields, such as~(\ref{L-q1}), are well known in the context of two-dimensional sigma models
under the name of gauged linear sigma models (GLSM). The GLSM approach to the $\CPn$-system
dates back to~\cite{Cremmer, Vecchia1, Vecchia2}. Those authors, as well as most of their successors, used a version of GLSM where one imposes the constraint $w^\alpha {\bar w}^\alpha=1$, which breaks $\mathbb{C}^\ast$ down to $U(1)$, so that the auxiliary gauge field becomes a $U(1)$ gauge field. 
We prefer not to introduce any constraints and work with the $\mathbb{C}^\ast$ quotient.

\subsection{Quantization} 
 
To quantize the theory, we define the operators   $\hat \pi_\beta= -i \partial/ \partial w^\beta$, $\hat{\bar \pi}_{\beta} = -i \partial / \partial \bar w^\beta $. The standard commutation relations hold:
\be
[\hat \pi_\beta, w^\alpha] \ = \ -i\,\delta^\alpha_\beta\,,\quad\quad 
[ \hat {\bar\pi}_\beta, \bar{w^\alpha}]\ =\ -i\,\delta^\alpha_\beta\,.
\ee
To write the quantum version of \p{H-class}, we should prescribe a particular way of ordering of the operators in $\hat H$. We make the following choice: 
\begin{empheq}[box = \fcolorbox{black}{white}]{align}
\lb{Ham-w}
\quad\hat H \ =\ - {\cal X}  \, \frac {\partial^2}{\partial w^\beta \partial \bar w^\beta} \equiv   - {\cal X} \triangle\,,
\end{empheq}
while the quantum constraints are chosen as 
\be
\lb{cons-bos-q0}
w^\alpha \frac \partial {\partial w^\alpha} \Psi \ =\ \bar w^\alpha \frac \partial {\partial \bar w^\alpha} \Psi  \ =\ 0\,,
\ee
Thus, we are going to solve the spectral problem $\hat H \Psi = E\Psi$  in the Hilbert space
  \be
  \mathcal{H}_0:={\cal L}_2\left(\CP^{n-1}\right)
  \ee
involving  the constraints \p{cons-bos-q0}
 and equipped with the norm\footnote{One could take any rapidly falling function of  ${\cal X}$ as the measure, but $e^{ - {\cal X}}$ is the most natural choice.}
\be
\lb{norm-w}
\langle \Psi|\Psi \rangle \ =\ \int \prod_{\alpha = 0}^{n-1} dw^\alpha d  \bar w^\alpha \exp \{  - {\cal X}\} \, |\Psi|^2 .
 \ee
Note that the Hamiltonian \p{Ham-w} is Hermitian in this Hilbert space: when we flip the derivative $\partial/\partial w^\alpha$ in the matrix element
$$ \int \prod_{\alpha = 0}^{n-1} dw^\alpha d  \bar w^\alpha \, \bar \Psi_1(w, \bar w) F({\cal X}) \frac {\partial^2}{\partial w^\alpha \partial \bar w^\alpha} \Psi_2(w, \bar w) \,,$$ 
it acts only on $\bar \Psi_1$. The contribution arising from differentiating $ F({\cal X})$ vanishes due to  \p{cons-bos-q0}.

\vspace{1mm}

In the unconstrained formulation, the spectral problem is
\be
-h^{\bar k j}  \partial_j   \bar \partial_k \Psi \ =\ -(1 + z^l \bar z^l) (\bar z^k z^j + \delta^{\bar k j} )  \, \partial_j  \bar \partial_k  \Psi \ =\ E\Psi\,,
\ee
and the norm \p{norm-w} transforms into the norm 
\be
\lb{norm-z}
\langle \Psi|\Psi \rangle \ =\ \int \prod_{j=1}^{n-1} dz^j d  \bar z^j  \, \det(h) \, |\Psi|^2 
 \ee
in the unconstrained Hilbert space.

\subsection{Eigenfunctions and eigenvalues}

\subsubsection{Without monopoles}
We now proceed to solve the spectral problem formulated above, sticking to the description \p{Ham-w}, \p{cons-bos-q0} with unresolved constraints.  We will be looking for eigenfunctions of  \p{Ham-w} satisfying \p{cons-bos-q0} of the following general form 
\be
\lb{Psi-Ans-q0}
\quad\Psi(w, \bar w) \ =\ \frac 1{{\cal X}^L} A_{\alpha_1 \ldots \alpha_L|\bar\beta_1 \ldots \bar\beta_L} w^{\alpha_1} \cdots w^{\alpha_L} \, 
 \bar w^{\beta_1} \cdots \bar w^{\beta_L}\quad
\ee
with an integer $L \geq 0$. 

\vspace{0.3cm}

\begin{thm} \lb{trace}
\vspace{0.3cm}
The function \p{Psi-Ans-q0} is an eigenfunction of the Hamiltonian \p{Ham-w} iff 
 the tensor $A$ is traceless,
\be
A_{\gamma \ldots \alpha_L|\bar\gamma \ldots \bar\beta_L} \ =\ 0 \,.
 \ee
 The eigenvalue is 
 \be\label{energbos}
 E(n, L, q=0) \ =\ L(L+ n-1)\,.
 \ee
\vspace{-0.7cm}
\end{thm}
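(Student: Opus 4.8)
The plan is to substitute the ansatz directly into $\hat H=-\mathcal{X}\triangle$, with $\triangle=\partial_\beta\bar\partial_\beta$ where $\partial_\beta:=\partial/\partial w^\beta$ and $\bar\partial_\beta:=\partial/\partial\bar w^\beta$, and to read off the condition on $A$. Write $\Psi=\mathcal{X}^{-L}P$ with $P=A_{\alpha_1\dots\alpha_L|\bar\beta_1\dots\bar\beta_L}\,w^{\alpha_1}\cdots w^{\alpha_L}\,\bar w^{\beta_1}\cdots\bar w^{\beta_L}$; since only the part of $A$ that is separately symmetric in the $\alpha$'s and in the $\bar\beta$'s contributes to $P$, we may assume $A$ has this symmetry. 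As $P$ is bihomogeneous of bidegree $(L,L)$, Euler's identity gives $w^\alpha\partial_\alpha P=\bar w^\alpha\bar\partial_\alpha P=LP$; together with $\partial_\alpha\mathcal{X}=\bar w^\alpha$ and $\bar\partial_\alpha\mathcal{X}=w^\alpha$ this shows that $\Psi$ is invariant under $w\to\lambda w$, so the constraints \p{cons-bos-q0} hold automatically and only the eigenvalue equation remains.

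The core is the Leibniz expansion
\[
\triangle\!\left(\mathcal{X}^{-L}P\right)=\big(\triangle\mathcal{X}^{-L}\big)\,P+\mathcal{X}^{-L}\,\triangle P+\big(\partial_\beta\mathcal{X}^{-L}\big)\big(\bar\partial_\beta P\big)+\big(\bar\partial_\beta\mathcal{X}^{-L}\big)\big(\partial_\beta P\big).
\]
For any $f=f(\mathcal{X})$ one has $\triangle f=\mathcal{X}f''+nf'$, using $\triangle\mathcal{X}=n$ and $\partial_\beta\mathcal{X}\,\bar\partial_\beta\mathcal{X}=\mathcal{X}$; hence $\triangle\mathcal{X}^{-L}=L(L+1-n)\mathcal{X}^{-L-1}$. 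The two cross terms combine, again by Euler, into $-L\mathcal{X}^{-L-1}\big(\bar w^\beta\bar\partial_\beta P+w^\beta\partial_\beta P\big)=-2L^2\mathcal{X}^{-L-1}P$. Finally, $\triangle P$ is a sum of $L^2$ terms, each contracting one $\alpha$-index of $A$ with one $\bar\beta$-index; by the assumed symmetry all are equal, so $\triangle P=L^2\,\tilde P$ with $\tilde P:=A_{\gamma\alpha_2\dots\alpha_L|\bar\gamma\bar\beta_2\dots\bar\beta_L}\,w^{\alpha_2}\cdots w^{\alpha_L}\,\bar w^{\beta_2}\cdots\bar w^{\beta_L}$ the function built from the single trace of $A$. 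Collecting the $\mathcal{X}^{-L-1}$ coefficients, $L(L+1-n)-2L^2=-L(L+n-1)$, so that
\[
\hat H\Psi=-\mathcal{X}\triangle\Psi=L(L+n-1)\,\Psi-L^2\,\mathcal{X}^{-(L-1)}\tilde P.
\]

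If $A$ is traceless then $\tilde P\equiv0$ and $\Psi$ is an eigenfunction with $E=L(L+n-1)$; for $L=0$ this is simply the constant function with $E=0$. Conversely, if $\hat H\Psi=E\Psi$ then $(E-L(L+n-1))\Psi=-L^2\mathcal{X}^{-(L-1)}\tilde P$, and since $P$ and $\mathcal{X}\tilde P$ are linearly independent homogeneous polynomials of bidegree $(L,L)$ whenever the traceless part of $A$ is nonzero — the splitting of a symmetric tensor into its traceless part plus $\delta$-traces is a direct sum — one concludes $E=L(L+n-1)$ and $\tilde P\equiv0$, i.e.\ $A$ is traceless. The only delicate points here are organizational rather than conceptual: making sure that only the symmetrized $A$ enters $\triangle P$ so that its $L^2$ contributions genuinely coincide, and noting in the converse that a ``pure-trace'' $A$ would merely re-describe a state of strictly lower $L$, which is precisely why the clean statement is phrased for traceless $A$. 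Everything else is the chain rule together with Euler's theorem.
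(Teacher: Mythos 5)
Your proof is correct and follows essentially the same route as the paper's: a direct computation of $\hat H$ on the ansatz (which the paper merely calls ``easily checked''), followed by a converse resting on the direct-sum decomposition of a symmetric tensor into its traceless parts. The only difference is organizational --- you extract both the eigenvalue and the trace obstruction from a single Leibniz expansion, $\hat H\Psi = L(L+n-1)\,\Psi - L^2\,\mathcal{X}^{-(L-1)}\tilde P$, whereas the paper phrases the converse as a sum of eigenfunctions with the distinct eigenvalues $L(L+n-1)$, $(L-1)(L+n-2)$, etc. --- but the substance is identical.
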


\begin{proof}
First, if $A$ is traceless, it can be easily checked that $\hat H \Psi = L(L+n-1) \Psi$, 
so that~\p{Psi-Ans-q0} is an eigenfunction 
with the eigenvalue~(\ref{energbos}).

If $A$ is not traceless, one can subtract its trace parts, recasting the wave function in the form
\be
\Psi(w, \bar w) \ =\ \frac 1{{\cal X}^L} \widehat{A}_{\alpha_1 \ldots \alpha_L}^{\bar\beta_1 \ldots \bar\beta_L} w^{\alpha_1} \cdots w^{\alpha_L} \, 
 \bar w^{\beta_1} \cdots \bar w^{\beta_L} \nn
 +\ \frac 1{{\cal X}^{L-1}} \widehat{A}_{\alpha_1 \ldots \alpha_{L-1}}^{\bar\beta_1 \ldots \bar\beta_{L-1}} w^{\alpha_1} \cdots w^{\alpha_{L-1}} \, 
 \bar w^{\beta_1} \cdots \bar w^{\beta_{L-1}}+\ldots \,,
\ee
where the tensors  $\widehat{A}$ are traceless. The Hamiltonian acts on each of the terms via multiplication by the eigenvalue~(\ref{energbos}) with different values of $L$. Thus, $\Psi$ is an eigenfunction if only one of the terms is present. This implies that $A$ is traceless.
\end{proof}

The ground state of $\hat H$ is $\Psi = const$. The corresponding eigenvalue is $E=0$. At the next level, we have the functions
\be
\Psi = \frac {A_{\alpha|\bar\beta} \,w^\alpha \bar w^{\beta} }{\cal X}
\ee
with $A_{\alpha|\bar\alpha} =0$. There are $n^2-1$ such independent functions. All of them have the energy $E= E(n,L=1,q=0) = n$. 
These $n^2-1$ degenerate states belong to the adjoint representation of $SU(n)$. Such a degeneracy should have been expected, because the Hamiltonian  \p{Ham-w} is obviously invariant under unitary transformations.

\begin{figure}
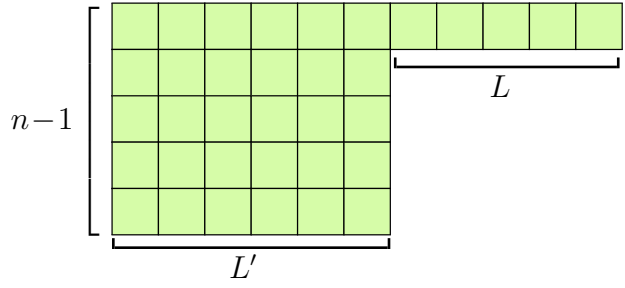

\centering
\begin{Overpic}[]{
\ytableausetup{boxsize=0.6cm}
\ytableaushort
{}
* [*(lbluecol)]{11,6,6,6,6}
}
\put(-15,20){$n\!-\!1 \left[
\begin{array}{rrr}
\\
\\
\\
\\
\\
\\
\end{array}
\right. $}
\put(57,34){$\underbracket[1pt][0.6ex]{\hspace{3cm}}_{}$}
\put(75,26){$L$}
\put(4,0){$\underbracket[1pt][0.6ex]{\hspace{3.7cm}}_{}$}
\put(26,-8){$L'$}
\end{Overpic}
\vspace{0.5cm}
\caption{Young diagram for the relevant $SU(n)$ representations.}
\label{young1}
\end{figure}

For higher $L$, the representations may be described by the Young diagram shown in Fig.~\ref{young1} (with $L'=L$).  Alternatively, we can describe it by using Dynkin labels. They are simply the differences of the row lengths of the Young diagram. As a result, these representations of $SU(n)$ have only two nonzero Dynkin labels for the leftmost and the rightmost node of the 
Dynkin diagram:
\be
\label{DynkinBosonic}
\quad a_{i = 1,\ldots, n-1} = (L, \underbracket[1pt][0.6ex]{\;0, \;\ldots, \;0,}_{n-3}\; L)\,.\quad 
\ee

The dimensions of such representations\footnote{See Appendix A.} are
 \be
\#(n,L) \ =\ \left( \begin{array}{c} L+n-2\\ n-2 \end{array}  \right)^2 \frac {2L + n-1}{n-1}\,.
\ee
For example, for $n =3$, $\#(3,L) = (L+1)^3$. We have a singlet, an octet, a 27-plet, etc.

Hermiticity of the Hamiltonian means that the inner prodict of two wave functions \p{Psi-Ans-q0} with different $L$'s and hence with different energies vanishes. It is not always
 so for different functions with the same $L$, but an orthogonal basis  can always be chosen.  
\vspace{1mm}

\subsubsection{With monopoles}

To describe a system in the presence of a $\mathbb{CP}$ monopole of charge $q\in \mathbb{Z}$, we replace the Lagrangian \p{L-q0} by 
\be
\tilde{L}\ =\ L - \dot w^\alpha A_\alpha - \dot{\bar  w}^\alpha \bar A_\alpha \,,
\ee
where 
\be
\lb{CP-monopole-w}
A_\alpha \ =\ -i\frac {q \bar w^\alpha}{2{\cal X}}, \quad \bar A_\alpha \ =\  i\frac {q  w^\alpha}{2{\cal X}}
\ee
is the monopole gauge field \p{A-CP} in homogeneous coordinates. 

The action corresponding to the new Lagrangian is still invariant under the $\mathbb{C}^\ast$ gauge transformations $w^\alpha   \to  \lambda(t) w^\alpha$. { The canonical Hamiltonian 
reads 
\be
H_q \ = \ {\cal X}  (\pi_\alpha +  A_\alpha)  (\bar \pi_\alpha + \bar A_\alpha)\,.
 \ee
For its quantum counterpart, we choose a natural expression
\be
\lb{Bochner}
\hat H_q \ =\ - \frac {{\cal X}}2 \{( \partial_\alpha + i  A_\alpha) , \, (\bar  \partial_\alpha + i  \bar A_\alpha) \}_+  \
 =\ - \frac {{\cal X}}2\left\{\left(\frac \partial{\partial w^\alpha} - \frac {q\bar w^\alpha}{2{\cal X}} \right),\,
\left(\frac \partial{\partial \bar w^\alpha}  + \frac {q w^\alpha}{2{\cal X}} \right) \right\}_+\,,
 \ee
the operator known as {\it Bochner Laplacian}.}

The constraints are modified as follows:
\be
\lb{cons-bos-q}
w^\alpha \frac \partial {\partial w^\alpha} \Psi \ =\  - \frac q2 \Psi, \qquad \bar w^\alpha \frac \partial {\partial \bar w^\alpha} \Psi  \ =\ \frac q2 \Psi.
\ee
This means that, 
 when  the homogeneous coordinates are multiplied by a common phase factor, $w^\alpha\to e^{i \theta} w^\alpha$, the wave function 
transforms as $\Psi\to e^{-i q\theta} \Psi$, implying that it is a section of the $q$th power of the Hopf bundle over $\CPn$. On the space of such sections, we define the same scalar product~(\ref{norm-w}), turning it into the Hilbert space that we denote as
\be
\mathcal{H}_q=\mathcal{L}_2^{(q)}\left(\CPn\right)\,.
\ee

We will be looking for eigenfunctions of \p{Ham-w} satisfying the constraints \p{cons-bos-q} in the form
\begin{empheq}[box = \fcolorbox{black}{white}]{align}
\lb{Psi-Ans-q}
&\quad \Psi(w, \bar w) \ =\ \frac 1{{\cal X}^{{L+L'\over 2}}} A_{\alpha_1 \ldots \alpha_L| \bar\beta_1 \ldots \bar \beta_{L'}} w^{\alpha_1} \cdots w^{\alpha_L} \, 
 \bar w^{\beta_1} \cdots \bar w^{\beta_{L'}}\quad,\\ \nonumber
 &\quad\textrm{where}\quad\quad L'-L=q \,. 
\end{empheq}

As earlier, for $\Psi$ to be an eigenfunction, the tensor $A_{\ldots}$ should be traceless. The eigenfunctions form $SU(n)$ multiplets that are characterized by the sets of Dynkin labels
\be
\quad (L, \underbracket[1pt][0.6ex]{\;0, \;\ldots, \;0,}_{n-3}\; L')\quad
\ee
The dimensions of these representations are 
\be
\lb{dim-q}
\#(n, L, L') \ =\ \left(\begin{array}{c} L+n-2\\ n-2 \end{array}  \right) \left(\begin{array}{c} L' + n-2\\ n-2 \end{array}  \right) \frac {L+L'+ n-1}{n-1}\,.
\ee
The energies are 
\be
\lb{spektr-bos-q}
\quad E(n, L, L') \ =\ L L' + (n-1) \frac {L + L'}2
 \ee

The wave functions \p{Psi-Ans-q0} and the  multiplicities \p{dim-q} have been derived earlier by Kuwabara~\cite{Kuwabara} by a different method: he considered the fiber bundle
$S^{2n-1} \to \mathbb{CP}^{n-1}$ and picked up the spherical harmonic on $S^{2n-1}$ that ``survive" the projection. 

 \vspace{1mm}

It is also instructuve to write out an explicit expression for the  Hamiltonian in inhomogeneous coordinates.
In the chart excluding the points with $w^0 = 0$, a wave function satisfying \p{cons-bos-q} can be presented in the form  
\be
\Psi(w^\alpha, \bar w^\alpha) \ =\  \left( \frac {\bar w^0}{w^0}\right)^{q/2}\Phi(z^j, \bar z^j) \quad {\rm with} \quad  z^j = \ \frac {w^j}{w^0}.
  \ee 
 The wave functions $\Phi(z^j, \bar z^j)$ live in the ${\cal L}_2$ Hilbert space with the norm
\p{norm-z}.

One can derive that
\be
\hat H_q \Psi(w, \bar w) \ =\ \left( \frac {\bar w^0}{w^0}\right)^{q/2} \hat H'_q \Phi(z^j, \bar z^j) \,,
\ee
where
\be
\lb{H-Phi}
\hat H'_q &=&  - \frac 12 h^{\bar k j}  \left\{\frac \partial {\partial z^j} + iA_j,  \frac \partial {\partial \bar z^{\bar k}} + i\bar A_k \right\}_+ \nn 
&=& (1 + z^l \bar z^l) \left[ - (\delta^{jk} +  z^j \bar z^k) \frac{\partial^2}{\partial z^j \partial \bar z^k}  + \frac q2 \left(z^j \frac {\partial}{\partial z^j} - 
\bar z^k \frac {\partial}{\partial \bar z^k} \right) \right] + \frac{q^2  z^l \bar z^l}4\,.
\ee  
This Hamiltonian  represents a multidimensional generalization of \p{nePauli-proj}. 

\subsection{Orthogonal polynomials}

It is well-known that  the eigenfunctions of the Laplacian on the sphere $S^2\simeq \CP^1$ are expressed in terms of the (associated) Legendre polynomials.
Eigenfuctions of the Hamiltonian \p{Ham-w} are also expressed in terms of certain multidimensional orthogonal polynomials.

\subsubsection{$n=2$}

Consider first the simplest $\mathbb{CP}^1$ case. The eigenstates form $SU(2)$ multiplets of dimension $L+ L' +1$. The wave functions are expressed via the {\it Jacobi polynomials} \cite{Tamm,Wu-Yang}, which can also be understood in our approach\footnote{The relation of Jacobi polynomials to the representation theory of $SU(2)$ is nicely explained in~\cite{Gurarie}.}. Consider as an example the case $q=1$. At the level $L=0$ (and hence $L' = 1$), we have two states:
\be
\lb{q1-L0}
\Psi_{L=0}(w^{0,1}, \bar w^{0,1}) \ =\ \frac {\bar w^0}{\sqrt{w^\alpha \bar w^\alpha}} \quad {\rm and} \quad \frac {\bar w^1}{\sqrt{w^\alpha \bar w^\alpha}}\,.
\ee
This gives
\be
\Phi_{L=0}(z, \bar z) \ =\ \frac 1{\sqrt{1 + z \bar z}} \quad {\rm and} \quad  \frac {\bar z}{\sqrt{1 + z \bar z}}\,.
 \ee
At the level $L=1$, a generic function \p{Psi-Ans-q} reads 
\be
\Psi_{L=1}(w^{0,1}, \bar w^{0,1}) \ =\ \frac 1  {{\cal X}^{3/2}} A_{\alpha|\bar\beta_1 \bar\beta_2} \, w^\alpha \bar w^{\beta_1} \bar w^{\beta_2}
\ee
with the tracelessness condition to be imposed on $A_{\alpha|\bar\beta_1 \bar\beta_2}$. This gives in terms of $z, \bar z$ four functions:
\be
\lb{q1-L1}
\Phi_{L=1}(z, \bar z)\, :\ =\  \frac z{(1 + z \bar z)^{3/2}}, \quad  \frac {1 - 2 z \bar z}{(1 + z \bar z)^{3/2}}, \quad \frac {\bar z( 2- z \bar z )}{(1 + z \bar z)^{3/2}},
 \quad \frac {\bar z^2}{(1 + z \bar z)^{3/2}}\,.
\ee
For $L=2$, there are   six functions:
\be
\lb{q1-L2}
\Phi_{L=2}(z, \bar z) \, : \ =  \frac {z^2}{(1 + z \bar z)^{5/2}}, \quad  \frac {z(2 - 3z \bar z)}{(1 + z \bar z)^{5/2}}, \quad  \frac {1 - 6 z \bar z + 3 z^2 \bar z^2}{(1 + z \bar z)^{5/2}},  \nn
\quad  \frac {\bar z(3 - 6 z \bar z + z^2 \bar z^2)}{(1 + z \bar z)^{5/2}}, \quad 
 \frac {\bar z^2(3 - 2 z \bar z)}{(1 + z \bar z)^{5/2}}, \quad  \frac {\bar z^3}{(1 + z \bar z)^{5/2}}\,.
\ee

The functions \p{q1-L0}, \p{q1-L1}, \p{q1-L2} as well as all the functions with still higher $L$  are normalizable and mutually orthogonal with the measure 
 \be
\lb{measure2}
  d\mu \ =\ \frac {dz d\bar z}{(1 + z \bar z)^2}\,. 
 \ee
  
Introducing 
$$e^{i\phi} = \frac z{\sqrt{z \bar z}}  \quad {\rm and} \quad t = \cos \theta  = \frac {1-z\bar z}{1 + z\bar z}\,, $$ this measure acquires a flat form, $d\mu = dt d\phi$ with
$\phi \in [0,2\pi)$ and $t \in [-1,1]$.

 The eigenstates are distinguished by the (integer) eigenvalues of the angular momentum operator $\hat m = -i \dd/\dd \phi$. First let us pick the states  with $m=0$. Their wave functions read $\Phi_L^{m=0}(t) \sim \sqrt{1 + t} \, P_L(t)$, where $P_L(t)$ are polynomials of degree $L$. The functions $\Phi_L^{m=0}(t)$ are normalized in the interval $[-1, 1]$ with the flat measure. It follows that $P_L(t)$ are normalized in this interval with the measure $d\mu = (1+t) dt$ and represent the  Jacobi polynomials\footnote{By definition, the Jacobi polynomials $P_n^{\alpha\beta}(t)$ are orthogonal in the interval  $[-1, 1]$ with the measure $(1-t)^\alpha (1+t)^\beta$. The first three  polynomials read
\be
\lb{Jacobi}
&&P_0^{\alpha \beta} =1, \quad P_1^{\alpha \beta}(t) = \alpha+1 + \frac {\alpha + \beta + 2}2(t-1), \nn
&&P_2^{\alpha \beta}(t)  = \frac {(\alpha+1)(\alpha+2)}2 + \frac {(\alpha+2)(\alpha+\beta+3)}2 (t-1) + \frac {(\alpha+\beta + 3)(\alpha+\beta + 4)}8 (t-1)^2.
\ee
Note the relations \cite{Wu-Yang}
\be
\lb{relations}
&&P_{s+\alpha}^{-\alpha, \beta}(t) \ =\ C(\alpha,\beta,s) (t-1)^\alpha P_s^{\alpha,\beta}(t), \nn
&&P_{s+\beta}^{\alpha, -\beta}(t) \ =\ C'(\alpha,\beta,s) (t+1)^\beta P_s^{\alpha,\beta}(t) \,, 
\ee
which hold for integer $\alpha,\beta$. 
}  $P_L^{0,1}(t)$. 
The wave functions in the $m=1$ sector have the form
\be
\Phi_L^{m=1}(t,\phi) \ \sim \ e^{i\phi} \sqrt{1-t}\,(1+t) \, P_{L-1}^{1,2}(t) \qquad  (L \geq 1). 
\ee
The general formula for any positive $q$ and any $m$ reads [to verify that these expressions coincide with the wave functions in  \p{q1-L0}, \p{q1-L1}, \p{q1-L2}, use the relations \p{relations}]
 \be
\Phi^m_L(t, \phi) \ \sim \  e^{im\phi} (1-t)^{-m/2} (1+t)^{(m+q)/2}\,  P_L^{-m, m+q} (t)
\ee
with $L = L_0,L_0+1, \ldots$, where
  \be
L_0 \ =\ -(m+q) \quad {\rm if} \  m < -q, \quad L_0 = 0   \quad {\rm if} \ -q \leq m \leq 0 \quad {\rm and} \quad L_0 = m  \quad {\rm if} \ m  > 0\,.
 \ee

\subsubsection{$n = 3$}
For higher $n$, the eigenfunctions can also be expressed in terms of orthogonal polynomials, in this case of several variables. Such polynomials were rather extensively studied by mathematicians \cite{Suetin,Dunkl}. We will not come to grips with the description of a generic system but,  to illustrate how the orthogonal polynomials emerge in our problem, we will discuss in detail only the simplest nontrivial case: the motion over $\mathbb{CP^2}$ in the absence of monopoles. The eigenstates of the Hamiltonian \p{Ham-w} with the constraints \p{cons-bos-q0} form the $SU(3)$ multiplets with Dynkin labels $(L,L)$. When $L=0$, the wave function is just a constant. In the simplest nontrivial case $L=1$, we have an octet of degenerate states. Their eigenfunctions can be chosen as
\be
\lb{octet}
&&\Psi_{1,2,4,5,6,7} \ = \ \frac {w^\alpha \bar w^\beta}{\cal X} \quad {\rm with} \quad \alpha \neq \beta, \nn
&&\Psi_3 \ = \ \frac {w^0 \bar w^0 - w^1 \bar w^1}{\cal X} \quad {\rm and} \quad \Psi_8 \ =\ \frac {w^0 \bar w^0 + w^1 \bar w^1 - 2 w^2 \bar w^2}{\cal X} \,.
\ee 
 Let us concentrate on the states $\Psi_3$ and $\Psi_8$ having a zero weight.
Fixing the gauge $w^0 = 1$ and introducing $x = z^1 \bar z^1$, $y = z^2 \bar z^2$, their wave functions acquire the form
 \be
\lb{Psi38}
\Psi_3 \ =\ \frac {1-x}{1+x+y} \qquad {\rm and} \qquad  \Psi_8 \ =\ \frac {1+x - 2y}{1+x+y}\,.
 \ee
The functions \p{octet} and \p{Psi38} are orthogonal with the measure
   \be
\lb{measure3}
  d\mu \ =\ \frac {dz^1 d\bar z^1 dz^2 d\bar z^2}{(1 + z^1 \bar z^1 + z^2 \bar z^2)^3} \ = \ \frac {dx dy d\phi_1 d\phi_2}{(1 + x + y)^3}\,,
 \ee
where $\phi_{1,2}$ are the phases of $z^{1,2}$, of which the functions \p{Psi38} do not depend. Introduce
 \be
t_1\  = \ \frac x{1+x+y}, \qquad  t_2\  = \ \frac y{1+x+y} \,.
 \ee
Then
\be
\lb{measure-t12}
\frac {dx dy}{(1+x+y)^3} = dt_1 dt_2
\ee
and the domain where the variables $t_{1,2}$ change is a triangle (a 2-dimensional simplex\footnote{We note in passing that there is a natural action of the torus $[U(1)]^{n-1}$ on $\CP^{n-1}$. The variables $t_i$ are the moment maps of this action, and the  moment polytope is an $(n-1)$-simplex. This is the geometric reason why these variables are useful.}) 
\be
\lb{simplex}
0 \leq t_1 \leq 1, \qquad 0 \leq t_2 \leq 1 - t_1 \,.
 \ee
Being expressed in terms of $t_{1,2}$, the wave functions \p{Psi38} read
 \be
\lb{Psi38-t}
\Psi_3 \ = \ 1 - 2t_1 - t_2, \qquad \Psi_8\  = \ 1 - 3t_2 \,.
 \ee
They are orthogonal on the simplex \p{simplex} with the flat measure \p{measure-t12}.

Consider now the functions of zero weight in the higher multiplets: the 27-plet with the Dynkin labels $(L=2,L=2)$,  the 64-plet with the labels $(3,3)$, etc. At level $L$, there are $L+1$ such states.
Indeed, a representation $(L,L)$ can be thought of as a symmetrized product of $L$ adjoint representations, and the elements of zero weight in this representation can be thought of as the symmetrized  product of the Cartan subalgebras of $su(3)$. The rank of  $su(3)$ is 2. The number of elements 
at level $L$ is the number of ordered sets $p_1 \leq \cdots \leq p_{L-1}$ where each $p_j$ acquires only two values. There are $L+1$ such sets.

The wave functions of three naturally chosen zero-weight states at level $L=2$ are
\be
\lb{L=2}
&&\Psi_1^{L=2} \ =\ \frac {( w^0 \bar w^0)^2 - 4  w^0 \bar w^0  w^1 \bar w^1  + ( w^1 \bar w^1)^2 }{{\cal X}^2}  \ = \ 
  1 -6t_1 - 2t_2 + 6t_1^2 + t_2^2 + 6 t_1 t_2, \nn
&&\Psi_2^{L=2} \ =\ \frac {( w^0  \bar w^0)^2 - 4  w^0 \bar w^0  w^2 \bar w^2  + ( w^2 \bar w^2)^2 }{{\cal X}^2}  \ = \ 
  1 -2t_1 - 6t_2 + t_1^2 + 6t_2^2 + 6 t_1 t_2, \nn
&&\Psi_3^{L=2} \ =\ \frac {( w^1 \bar w^1)^2 - 4  w^1 \bar w^1 w^2  \bar w^2  + ( w^2 \bar w^2)^2 }{{\cal X}^2}  \ = \ 
t_1^2 + t_2^2 - 4 t_1 t_2.\nn
\ee
They are linearly independent, but not mutually orthogonal: $\langle 1|1 \rangle  =  \langle 2|2 \rangle =  \langle 3|3 \rangle  \ =\ \frac 1{30}$, $\langle 1|2 \rangle  =  \langle 1|3 \rangle =  \langle 2|3 \rangle  \ =\ \frac 1{180}$. 

One can, of course, find their orthogonal linear combinations,  which we will do below. Notice that the functions \p{L=2} are orthogonal to  \p{Psi38} and to $\Psi^{L=0} = 1$. 

\subsubsection{Orthogonal polynomials on the simplex}

The functions $\Psi^{L=2}_{1,2,3}$ are quadratic polynomials in two variables $t_1$ and $t_2$. After orthogonalization, they become orthogonal polynomials on the simplex \p{simplex} with flat measure. The problem of finding a set of orthogonal polynomials on the simplex \p{simplex} with the weight $t_1^\alpha t_2^\beta (1 - t_1 - t_2)^\gamma$ was addessed and solved long time ago by P. Appell \cite{Appell}. They represent a generalisation of the Jacobi polynomials and are expressed through them. In our case, when $\alpha = \beta = \gamma = 0$,  the Appell polynomials have the form\footnote{For a general expression, see Eq.(29) in chapter 10 of the book \cite{Suetin}.} 
\be
A_{p\geq k}(t_1, t_2) \ =\ (1-t_2)\, P^{0,2k+1}_{p-k}(1-2t_2) \, P_k
\left( \frac {2t_1 + t_2 -1}{1-t_2}   \right)  \, ,
 \ee
where $P^{0, 2k+1}_{p-k}$ are the Jacobi polynomials and $P_k$ are the ordinary Legendre polynomials.

The calculation gives
\be
&&A_{00} = 1, \nn
&& A_{10} =  1 - 3t_2, \qquad A_{11} = 2t_1 + t_2 -1, \nn
&&A_{20} = 1-8t_2 + 10 t_2^2, \qquad A_{21} = -1 + 2t_1 + 6t_2 - 5t^2_2 - 10 t_1 t_2, \nn 
&&A_{22} = 2( 1 - 6t_1 - 2t_2  + t_2^2 + 6 t_1^2 + 6 t_1 t_2).
\ee
The second line coincides with  \p{Psi38-t}. The functions $A_{2k}$ represent (mutually orthogonal) linear combinations of the functions \p{L=2}:
\be
A_{20} = \frac {3(\Psi_2^{L=2} + \Psi_3^{L=2}) - \Psi_1^{L=2}}2, \quad A_{21} = \Psi_3^{L=2} - \Psi_2^{L=2}, \quad A_{22} = 2\Psi_1^{L=2}.
 \ee

\vspace{1mm}

A similar analysis can be carried out for higher $n$. Suppose one still has $q=0$, and consider only the wave functions of zero weight. In terms of $z^j, \bar z^j$, the wave functions  \p{Psi-Ans-q0} with different $L$'s are orthogonal to one another with the measure 
\be
\lb{meas-z}
d\mu \ =\ \frac {\prod_{j=1}^{n-1} dz^j d\bar z^j}{(1 + z^k \bar z^k)^n} \ \sim \ \frac {\prod_{j=1}^{n-1} dx_j }{\left( 1 + \sum_{j=1}^{n-1} x_j \right)^n}\,.
 \ee
The functions of the same $L$ can be orthogonalized. Introduce the variables
 \be t_j \ =\ \frac {x_j}{ 1 + \sum_{j=1}^{n-1} x_j } \,. 
\ee

\begin{lem}
\vspace{0.3cm}
The measure \p{meas-z} corresponds to a flat measure 
\be
\lb{flat}
d\mu = \prod\limits_{j=1}^{n-1} dt_j
\ee
on the simplex
\be
\lb{simplex-n}
0 \leq t_1 \leq 1, \ 0 \leq t_2  \leq 1 - t_1, \ \ldots \ ,0 \leq t_{n-1} \leq 1 - \sum_{j=1}^{n-2} t_j \,.
 \ee
 \vspace{-0.3cm}
\end{lem}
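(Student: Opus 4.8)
The plan is to reduce the complex integral in \p{meas-z} to a real integral over the positive orthant and then to apply the substitution $x_j\mapsto t_j$ directly, the only point requiring a line of computation being a rank-one Jacobian. First I would write $z^j=\sqrt{x_j}\,e^{i\phi_j}$ with $x_j=z^j\bar z^j\ge 0$ and $\phi_j\in[0,2\pi)$, so that $dz^j\,d\bar z^j\propto dx_j\,d\phi_j$. Since the integrand $(1+\sum_k x_k)^{-n}$ does not depend on the phases, the $\phi_j$ integrate out to an overall constant, leaving $\prod_{j=1}^{n-1}dx_j\,(1+\sum_{k=1}^{n-1}x_k)^{-n}$ over the region $x_j\ge 0$ --- precisely the right-hand side of \p{meas-z}.

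Next I would perform the change of variables. Put $u:=1-\sum_{k=1}^{n-1}t_k$. Summing the defining relations $t_j=x_j/(1+\sum_k x_k)$ gives $\sum_k t_k=(\sum_k x_k)/(1+\sum_k x_k)$, hence $u=(1+\sum_k x_k)^{-1}$ and $x_j=t_j/u$. Differentiating, $\partial x_j/\partial t_k=u^{-1}\delta_{jk}+u^{-2}t_j$, which is a rank-one perturbation of $u^{-1}$ times the identity; its determinant is therefore $u^{-(n-1)}\big(1+u^{-1}\sum_j t_j\big)=u^{-(n-1)}\cdot u^{-1}=u^{-n}$, using $\sum_j t_j=1-u$. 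Consequently $\prod_j dx_j=u^{-n}\prod_j dt_j=(1+\sum_k x_k)^{n}\prod_j dt_j$, so the factor $(1+\sum_k x_k)^{n}$ in \p{meas-z} cancels exactly and we obtain the flat measure \p{flat}.

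It remains to identify the image domain. The map $x\mapsto t$ is a diffeomorphism of $\{x_j\ge 0\}$ onto $\{t_j\ge 0,\ \sum_j t_j<1\}$: indeed $x_j\ge 0\iff t_j\ge 0$, while $\sum_j t_j=(\sum_k x_k)/(1+\sum_k x_k)<1$ automatically, and conversely $x_j=t_j/(1-\sum_k t_k)$ recovers $x$ from any such $t$. Slicing the region $\{t_j\ge0,\ \sum_j t_j\le 1\}$ coordinate by coordinate gives the nested description \p{simplex-n}, which completes the proof. I do not expect any genuine obstacle here; the single place that calls for care is the Jacobian determinant, and the rank-one structure of $\partial x_j/\partial t_k$ makes that a one-liner via the matrix determinant lemma.
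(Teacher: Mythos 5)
Your proposal is correct and follows essentially the same route as the paper: the heart of both arguments is that the Jacobian matrix is a rank-one perturbation of a scalar matrix, so its determinant collapses to a single term (you compute $\det(\partial x/\partial t)=u^{-n}$ via the matrix determinant lemma, while the paper computes the forward Jacobian $\det(\partial t/\partial x)=R^{-n}$ by noting that all higher minors of the rank-one part vanish). Your additional steps --- the polar-coordinate reduction of $dz^j\,d\bar z^j$ to $dx_j\,d\phi_j$ and the explicit identification of the image simplex --- are correct and merely make explicit what the paper leaves implicit.
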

\begin{proof} 
The Jacobian reads  (here $R=1+\sum_{j=1}^{n-1} x_j$):
\be 
\lb{Jacobian} 
J \ =\ \frac {D(t_1, \ldots, t_{n-1})}{D(x_1, \ldots x_{n-1})} \ =\ \left| \begin{array}{cccc} \frac 1R - \frac {x_1}{R^2}& \ldots & \ldots & -   \frac {x_1}{R^2}\\
- \frac {x_2}{R^2} & \frac 1R - \frac {x_2}{R^2}& \ldots& - \frac {x_2}{R^2} \\
\ldots& \ldots & \ldots & \ldots \\
- \frac {x_{n-1}}{R^2} &  \ldots & \ldots & \frac 1R - \frac {x_{n-1}}{R^2} 
\end{array}  \right|.
\ee
If  the terms  $1/R$ were crossed out, one would obtain the matrix of rank 1. The determinant and all the minors of such a matrix vanish. This means that the only remaining contributions in the determinant  \p{Jacobian} are
 $$  J \ =\ \frac 1 {R^{n-1}} - \sum_j^{n-1} \frac {x_j}{R^n} \ =\ \frac 1 {R^n}\,, $$
from which \p{flat} follows.
\end{proof}
As in the $n=3$ case, the wave functions at level $L$  represent polynomials of degree $L$ in $n-1$ variables $t_j$, which are orthogonal on the simplex \p{simplex-n} with the flat measure.

\section{Supersymmetric $\mathbb{CP}^{n-1}$}\label{SUSYsec}
\setcounter{equation}0

Our next goal is to find the spectrum of  the supersymmetric $\mathbb{CP}^{n-1}$ model in a monopole background.
We will do it using the homogeneous coordinates $w^\alpha, \bar w^\alpha$ and their superpartners. It is instructive, however, to outline first the description of this model in terms of the inhomogeneous coordinates $z^j = w^j/w^0, 
\bar z^j = \bar w^j/\bar w^0$, which is mostly found in the literature.

\subsection{Inhomogeneous description}

To do so, we consider the 1-dimensional superspace $(t, \theta, \bar \theta)$ and introduce $n$ chiral superfields~$Z^j$ satisfying the condition $\bar D Z^j = 0$ and their conjugates $\bar Z^j$ satisfying $D \bar Z^j = 0$, where\footnote{See e.g. Chap. 7 and Chap. 9 of the book \cite{book}.} 
\be
\lb{covder}
D \ =\ \frac{\partial}{\partial \theta} - i\bar \theta \frac \partial{\partial t}\,, \qquad \qquad 
\bar D \ =\ - \frac \partial {\partial \bar \theta} + i \theta \frac   \partial{\partial t}
 \ee 
are the supersymmetric covariant derivatives. The component expansions of $Z^j$ and $\bar Z^j$ are
 \be
Z^j \ =\ z^j(t_L) + \sqrt{2}\, \theta \psi^j(t_L), \qquad \qquad
\bar Z^j \ =\ \bar z^j(t_R)  - \sqrt{2}\, \bar \theta \bar \psi^j(t_R),
 \ee
with $t_{L,R} = t \mp i \theta \bar \theta$. Here $\psi^j$ and $\bar \psi^j$ are the fermion superpartners of $z^j$ and $\bar z^j$.

The action of the model reads\footnote{ More known is the model involving real superfields and describing the geometry of the de Rham complex \cite{Wit82,Wit-Morse}. The model \p{act-Z} describes the {\it Dolbeault} complex and, in contrast to the model considered in Refs. \cite{Wit82,Wit-Morse},   allows one to include  gauge fields in the dynamics.}
 \be
\lb{act-Z}
S \ =\ \int d\theta d \bar \theta dt \left[ \frac 14 h_{j\bar k}(Z, \bar Z)
  D Z^j  \bar D \bar Z^k + {\cal R}(Z, \bar Z) \right]\,, 
 \ee
 where 
\be
\lb{R-Z}
{\cal R} \ =\ -\frac q 2   {\cal K}  = -\frac q2 \ln (1 + Z^j \bar Z^j)\,.
 \ee
The last term describes the monopole gauge field. Indeed, the corresponding component Lagrangian reads\footnote{For
a generic complex manifold, the Lagrangian also involves a quartic fermionic term. But for the K\"ahler manifolds it is absent.} 
 \be
\lb{L-comp-Z}
&&L \ = \  h_{j\bar k}\,\left[\dot{z}^j\dot{\bar{z}}^k +{i\over 2}\left(\psi^j  \dot{\bar{\psi}}^k-\dot{\psi}^j \bar{\psi}^k\right)\right]-{i\over 2}\,\left[\dd_j h_{l\bar k}
\,\dot z^l - \bar \dd_{k}  h_{j\bar l}\, \dot{\bar{z}}^l \right] \psi^j \bar{\psi}^k \nn
&&- A_j \dot z^j - \bar A_k \dot {\bar z}^k + i(\partial_j \bar A_k - \bar \partial_k A_j) \,\psi^j \bar \psi^k 
\ee
with 
 \be
\lb{A-CPn}
A_j = i \partial_j {\cal R} \ =\ -\frac {iq \bar z^j}{2(1+ z^l \bar z^l)},\, \qquad 
\bar A_k = -i\bar\partial_k {\cal R} \ =\ \frac {iq  z^k }{2(1+ z^l \bar z^l)},
 \ee
which coincides with \p{A-CP}.
The function ${\cal K}$ is nothing but the K\"ahler potential, $h_{j\bar k} \ = \ \partial_j \bar \partial_k {\cal K}$. 

By construction, the Lagrangian \p{L-comp-Z} is invariant  w.r.t. the supersymmetry transformations,
\be
&&\delta z^j = \epsilon\,\psi^j \,,\quad\quad   \delta \bar \psi^j=i\epsilon \dot{\bar z}^j\,, \nn
&&\delta \bar{z^j}=-\bar{\epsilon}\,\bar{\psi^j}\,,\quad\quad  \delta  \psi^j= -i\bar \epsilon  z^j\,.
\ee
The classical supercharges can be calculated using Noether's theorem. They read \cite{DiracSQM}
 \be
\lb{Qclass-pi}
&&Q_{\cal R} = \psi^j \left(\pi_j  - \frac i2 \partial_j h_{l\bar k} \psi^l \bar \psi^k  + i \partial_j {\cal R} \right  ), \nn
&&\bar Q_{\cal R} = \bar\psi^k \left(\bar\pi_k  + \frac i2 \partial_{\bar k} h_{j\bar l} \psi^j \bar \psi^k - i \bar \partial_k {\cal R}\right)\,,
 \ee
where the canonical momenta $\pi_j$ and $\bar \pi_k$ are obtained by differentiating the Lagrangian over $\dot{z}^j$ and $\dot{\bar z}^k$, while keeping $\psi^m$ and $\bar \psi^m$ fixed. 
 
 To define the corresponding supersymmetric quantum problem, one should\footnote{See \cite{howto,DiracSQM,book} for detailed explanations.}
\begin{enumerate}
\item To go over from the fermion variables $\psi^j$ and $\bar \psi^j$ carrying the world indices to the tangent space variables $\psi^a = e^a_j \psi^j, \ \bar \psi^a = \bar e^a_j \bar \psi^j$, where $e^a_j$ are the vielbeins. The advantage of this description is that  the classical Poisson bracket of the fermion variables becomes trivial $\{\psi^a, 
\bar \psi^b\}_{PB}  = i \delta^{ab}$, while the Poisson brackets like $\{\psi, z\}_{PB}$ vanish. The classical supercharges will then acquire the form
\be
\lb{Qclass-p}
&&Q_{\cal R} = e^j_c \psi^c \left(p_j -  i \omega_{a\bar b, j}  \psi^a \bar \psi^b  + i \partial_j {\cal R} \right)\,,  \nn
&&\bar Q_{\cal R} = \bar e^k_c \bar\psi^c \left(\bar p_k   - i  \omega_{a\bar b, \bar k}\psi^a \bar \psi^b  -   i \bar \partial_k {\cal R}\right)\,,
 \ee
where $p_j$ and $\bar p_k$ are obtained from the variation of the Lagrangian with keeping $\psi^c$ and $\bar \psi^c$ fixed  (they do not coincide with $\pi_j,  \bar \pi_k$) and $ \omega_{a\bar b, j},\, \omega_{a\bar b, \bar k}$ are the spin connections.

\item To go over from the classical supercharges to the quantum ones using the {\it Weyl} (symmetric) ordering prescription.
These quantum supercharges act in the Hilbert space with the flat measure.
 \item To make a similarity transformation and define the covariant quantum supercharges acting in the Hilbert space with the measure \p{norm-z}.
Such quantum supercharges are nilpotent. The anticommutator $\{\hat Q_{\cal R}, \hat {\bar Q}_{\cal R} \}$ gives the quantum Hamiltonian.

\end{enumerate}

Inspecting the quantum supercharges $\hat Q_{\cal R}$ and $\hat {\bar Q}_{\cal R}$ thus obtained, one can notice that

\begin{enumerate}
 \item The action of $\hat Q_{\cal R}$ on the wave functions $\Psi(z^m, \bar z^m; \psi^a)$ maps to the action of the nilpotent operator
\be
\dd_{\cal R} \ =\ \dd - \dd \left({\cal R} - \frac 14 \ln \det h \right) \wedge
\ee
of the twisted Dolbeault complex, where $\dd$  is  the operator of the exterior holomorphic derivative.  $\hat {\bar Q}_{\cal R}$ maps to the Hermitian conjugate operator.\footnote{See e.g. Propositions 1.4.23 and 1.4.25 in the book \cite{Nicolaescu} or Theorem 9.2 in the book \cite{book}).}

Thus, the Hilbert space of wave functions $\Psi(z^j, \bar z^j; \psi^a)$ where the quantum supercharges and Hamiltonian act may be mathematically described as the  set of all $(k, 0)$-forms on $\CP^{n-1}$ or else as 
\be
\lb{PiT}
  \mathcal{H}:={\cal L}_2\left(\Pi T\CP^{n-1}\right)\,,
  \ee
  where $\Pi T\CP^{n-1}$ is the $(1, 0$) tangent bundle\footnote{At every point $p\in \mathbb{M}$ of a complex manifold $\mathbb{M}$ we may decompose the complexified tangent space as $(T_p \mathbb{M})_{\mathbb{C}}=T_p^{(1, 0)} \mathbb{M}\oplus T_p^{(0, 1)} \mathbb{M}$.} to $\CP^{n-1}$ with fermionic fibers.

\item For  K\"ahler manifolds, the sum $\hat Q_{\cal R} + \hat {\bar Q}_{\cal R}$ is isomorphic to the {\it Dirac operator} \footnote{ To the best of our knowledge, this fact was first noticed by Hitchin \cite{Hitchin}. The operator \p{Dirac} is Hermitian. (Note the difference with the conventions of Ref. \cite{book} where the Dirac operator involved an extra factor  i and was anti-Hermitian.)} \lb{Dirac}
\be 
\lb{Dirac}
{\cal D}\!\!\!\!/  \ = \ e^M_A \gamma^A \left(-i\partial_M - \frac i4 \omega_{BC, M} \gamma^B \gamma^C +  A_M\right)\,,
 \ee
where $e^M_A$ are the (real) vielbeins, $\omega_{BC, M}$ are the spin connections and  $A_M$ is the gauge field.  For $\CP^{n-1}$, the latter  is given by 
Eq. \p{A-CP}. The Euclidean gamma matrices $\gamma^A$ satisfy the Clifford algebra:
\be
\lb{Clifford}
\gamma^A \gamma^B + \gamma^B \gamma^A \ =\ 2\delta^{AB}\,.
\ee
 The Dirac operator acts on the spinors\footnote{More exactly,  on the {\it bispinors} belonging to the representation $S_R \oplus S_L$, where $S_R$ and $S_L$ are two irreducible spinor representations of $Spin(D)$, $D$ being the real dimension of the manifold. In our example $D=2(n-1)$. 
One can observe in particular that the total number of independent components in  $\Psi(w,\bar w, \chi)$,
$$ N \ =\ \sum_{j=0}^{n-1} \, \left( \begin{array}{c} n-1 \\ j  \end{array} \right) \ =\ 2^{n-1}\,, $$
coincides with the number of bispinor components in the space of real dimension $D = 2(n-1)$. } living in $\mathbb{CP}^{n-1}$.  
\end{enumerate}

Note that the gauge field that twists the Dolbeault complex does not coincide with the physical gauge field that enters the Dirac operator and is determined by the second term in the integrand in Eq.\p{act-Z}, but is {\it shifted}. For $\mathbb{CP}^{n-1}$, this shift amounts to the shift of charge,
 \be
\lb{s-and-q}
q \ \to \ q - \frac n2 \ \stackrel {\rm def}= \ s.
 \ee
This is illustrated by the explicit expression for the covariant supercharge $\hat Q_{\cal R}$ with ${\cal R}$ written in Eq.\p{R-Z}:
 \be
\lb{supercharges-CP}
 \hat Q \ =\ -i\psi^k \left[ \frac \partial {\partial z^k} + \frac 12 \left(q - \frac n2 \right) \frac {\bar z^k}{1 + z^l \bar z^l} + \psi^a \hat {\bar \psi}^{\bar b} \, \omega_{a\bar b, k}   \right]\,.
 \ee

\subsection{Homogeneous description}

The K\"ahler potential on $\mathbb{CP}^{n-1}$ is expressed via homogeneous coordinates as 
\be 
{\cal K} \ =\ \ln (\bar w^\alpha w^\alpha) \ \equiv \ \ln {\cal X}\,. 
 \ee 
We now introduce $n$ chiral and $n$ antichiral superfields
\be
W^\alpha \ =\ w^\alpha(t_L) + \sqrt{2}\, \theta \chi^\alpha(t_L), \qquad \qquad
\bar W^\alpha \ =\ \bar w^\alpha(t_R)  - \sqrt{2}\, \bar \theta \bar \chi^\alpha(t_R)\,.
 \ee
and write the action as 
\be
\lb{act-W}
S \ =\ \int d\theta d \bar \theta dt \left[ \frac 14 h_{\alpha\bar\beta}(W, \bar W)
  D W^\alpha   \bar D \bar W^\beta+ {\cal R}(W, \bar W) \right]\,, 
 \ee
where ${\cal R} = -  (q/2) {\cal K}$ and
 \be 
h_{\alpha\bar\beta} \ = \ \dd_\alpha \bar \dd_\beta {\cal K}  \ =\ \frac {\Pi_{\alpha\bar\beta}}{\cal X}
\ee
with $\Pi_{\alpha \bar\beta} =\delta_{\alpha\bar\beta}  - {\bar{w^\alpha}w^\beta}/{\cal X}$.
It is convenient to write the component Lagrangian in the  following form:
  \be
\lb{L-comp-W}
L \ = \  h_{\alpha\bar\beta}\left(\dot{w}^\alpha\dot{\bar{w}}^\beta  - i \dot{\chi}^\alpha \bar{\chi}^\beta\right)
- i\dd_\alpha h_{\gamma\bar\beta}
\dot w^\gamma  \chi^\alpha \bar{\chi}^\beta  - 2\bar A_\alpha \dot {\bar w}^\alpha  + i(\partial_\alpha \bar A_\beta - \bar \partial_\beta A_\alpha) \chi^\alpha \bar\chi^\beta \nn
=  \frac {\Pi_{\alpha\bar\beta}}{\cal X}(\dot{w}^\alpha \dot{\bar{w}}^{\beta} - i  \dot{\chi}^\alpha \bar{\chi}^\beta)
+  \frac {i\chi^\gamma \bar{\chi^\beta} \dot w^\alpha} {{\cal X}^2} (\bar{w^\gamma}\Pi_{\alpha\bar\beta}+\bar{w^\alpha}\Pi_{\gamma\bar\beta}) 
-\ \frac {iq}{{\cal X}} w^\alpha  \dot {\bar w}^\alpha 
- \frac {q \Pi_{\alpha\bar\beta}}{\cal X} \chi^\alpha \bar \chi^\beta \,.
\ee
This complex Lagrangian differs from the real Lagrangian \p{L-comp-Z} (expressed in homogeneous coordinates) by a total derivative and brings about the same dynamics.
It  is invariant under the $\mathbb{C}^\ast$  gauge transformations:
\be
\lb{gauge-lambda}
w^\alpha \ \to \ \lambda(t) w^\alpha, \qquad \qquad \chi^\alpha \ \to \ \lambda(t) \chi^\alpha 
\ee
 (the terms $\propto \dot \lambda$  in the  transformed Lagrangian cancel). 
Due to supersymmetry, it is also invariant (as  is not difficult to check)  under the Grassmann-odd gauge transformations 
\be
\lb{gauge-eta}
 \delta \chi^\alpha \ =\ \eta(t) w^\alpha, \qquad \qquad \delta \bar \chi^\alpha =  \bar \eta (t) \bar w^\alpha\,.
\ee  
Bearing this in mind, one can present the supersymmetry transformations $\delta w^\alpha = \epsilon\,\chi^\alpha$ etc. in a $\mathbb{C}^\ast$-covariant form, 
\be
\label{SUSYhomtrans2}
&&\delta w^\alpha=\epsilon\,\chi^\alpha\,,\quad\quad  \delta \bar \chi^\alpha=i{\epsilon} {\cal D}\bar w^\alpha\,,\nn
&&\delta \bar{w^\alpha}=-\bar{\epsilon}\,\bar{\chi^\alpha}\,,\quad\quad   \delta {\chi^\alpha}=-i\bar \epsilon {\cal D}{w}^\alpha\,,
\ee
where the law of  $\mathbb{C}^\ast$ transformations for 
\be
{\cal D}w^\alpha \ =\ \dot w^\alpha - \frac {\bar w^\beta \dot w^\beta}{\cal X} w^\alpha \qquad {\rm and} \qquad 
{\cal D}\bar w^\alpha \ =\ \dot {\bar w}^\alpha - \frac {w^\beta \dot {\bar w}^\beta}{\cal X} \bar w^\alpha
 \ee
is the same as for $w^\alpha$ and $\bar w^\alpha$.

The canonical momenta for the Lagrangian \p{L-comp-W} are 
\be
\lb{can-mom}
&&p_{\alpha} = \frac{\dd L}{\dd \dot{\chi}^\alpha} \ = \ -\frac{i\,\Pi_{\alpha\bar\beta } \bar{\chi}^\beta}{\cal X}\,,\qquad
\bar \pi_{\alpha} = \frac{\dd L}{\dd \dot{\bar{w^\alpha}}} =\ \frac{\dot w^\beta \Pi_{\beta\bar\alpha}}{\cal X} 
- \frac {iq}{\cal X} w^\alpha\,,
\nn
&&\pi_{\alpha}=\frac{\dd \mathcal{L}}{\dd \dot{w^\alpha}}  =
{\Pi_{\alpha\bar\beta}\,\dot{\bar{w}}^{\beta}\over {\cal X}}+i\,\frac{\bar{w^\gamma}\Pi_{\alpha\bar\beta}+\bar{w^\alpha}\Pi_{\gamma\bar\beta}}{{\cal X}^2}  \chi^\gamma \bar{\chi^\beta} \,.
\ee

The following constraints hold: 
\be
\lb{super-cons}
{\cal C}_1 = w^\alpha \pi_\alpha + \chi^\alpha \, p_\alpha  \ = 0, \quad {\cal C}_2 = \bar w^\alpha\,\bar \pi_{\alpha} \ = \ -iq \,,\quad    {\cal C}_3 = w^\alpha\,p_{\alpha} = 0
\ee
 [one can check the validity of \p{super-cons} directly, substituting the momenta \p{can-mom} into \p{super-cons} and using $w^\alpha \Pi_{\alpha\bar\beta} = \bar w^\beta \Pi_{\alpha\bar\beta} = 0 $, but the true origin of the constraints are   the gauge symmetries \p{gauge-lambda} and \p{gauge-eta}]. 
The canonical Hamiltonian reads
\be
H \ =\ {\cal X} \left[\pi_\alpha  - \frac i{{\cal X}^2} (\bar w^\gamma \Pi_{\alpha\bar\beta} + \bar w^\alpha \Pi_{\gamma\bar\beta}) \chi^\gamma \bar \chi^\beta \right]  \left[\bar \pi_\alpha +  \frac {iq}{{\cal X}} w^\alpha  \right]
+ \frac {q}{\cal X} \Pi_{\gamma\bar\beta} \chi^\gamma \bar \chi^\beta \,.
 \ee
Taking into account the constraints \p{super-cons}, it reduces to a simple expression
\be
\lb{Ham-class-super-mon}
 H \ =\ {\cal X}  \pi_\alpha \bar \pi_\alpha + \bar w^\gamma \chi^\gamma p_\alpha \bar \pi_\alpha\,,
 \ee
one and the same for all $q$ !

The  supersymmetries~(\ref{SUSYhomtrans2})  bring about  conserved supercharges. For the complex Lagrangian \p{L-comp-W}, Noether's procedure gives
\be
\lb{Q-class-mon}
&&Q \ =\ \chi^\alpha \left(\pi_\alpha - \frac i2 \partial_\alpha h_{\beta\bar\gamma} \chi^\beta \bar \chi^\gamma \right) \ =\ \chi^\alpha \pi_\alpha, \nn
&&\bar Q \ =\ \bar \chi^\alpha \left( \bar \pi_\alpha +  \frac i2 \bar\partial_\alpha h_{\beta\bar\gamma} \chi^\beta \bar \chi^\gamma  - 2i \bar \dd_\alpha {\cal R} \right) \ =\ \bar \chi^\alpha\left( \bar \pi_\alpha + \frac {iqw^\alpha}{\cal X} \right) \ =\ i {\cal X} p_\alpha \bar \pi_\alpha\,.
 \ee
(The terms $\sim \dd h \chi \bar \chi$ vanish by symmetry: the factor $\dd_\alpha h_{\beta\bar\gamma}$ is symmetric under $\alpha \leftrightarrow \beta$ due to K\"ahlerian nature of the manifold while the factor $\chi^\alpha \chi^\beta$ is skew-symmetric;  the last formula for $\bar Q$ is valid in virtue of the expression for $p_\alpha$ in \p{can-mom} and the constraint $\bar w^\alpha \bar \pi_\alpha = -iq$).

For readers familiar with the 2D sigma model setup~\cite{Bykov2, Bykov3}, it is instructive to compare the Hamiltonian~(\ref{Ham-class-super-mon}) with the one appearing in 2D Gross-Neveu models. To match the two formulations, one rewrites the Hamiltonian as follows:
\be\label{Hamasymm}
H\ = \
\mathrm{Tr}\left[(w\otimes \pi+\chi\otimes p) (\bar{\pi}\otimes \bar{w}) \right]\,.
\ee
One notices the evident asymmetry between the holomorphic and anti-holomorphic pieces. On the contrary, in the study of  $\mathcal{N}=(2, 2)$ sigma models in~\cite{Bykov3}, the expression for the Hamiltonian is symmetric. The reason is that~(\ref{Hamasymm}) should arise from the dimensional reduction of an ${\mathcal{N}=(0, 2)}$ SUSY sigma model in 2D (see Ref.~\cite{Witten02SUSY} for a general introduction  and Refs.~\cite{CuiShifman, ChenShifman1, ChenShifman2} for applications to the $\CP^{n-1}$-model). Models with this amount of supersymmetry have not been formulated in Gross-Neveu language so far, but we expect that this could be done along the lines of the mechanical system discussed here.

\subsubsection{Quantization}
The canonical momenta \p{can-mom} become differential operators:
\be
\hat \pi_\alpha=-i{\partial \over \partial w^\alpha}, \quad  \hat {\bar \pi}_\alpha=-i{\partial \over \partial \bar{w^\alpha}}, \quad  p_\alpha = -i {\partial \over \partial \chi^\alpha} \,.
\ee
To write the quantum versions of the constraints \p{super-cons}, the supercharges \p{Q-class-mon} and the Hamiltonian \p{Ham-class-super-mon}, we should resolve possible ordering ambiguities in such a way that supersymmetry of the classical problem is preserved. This requirement  eliminates {\it almost} all ambiguities.

The quantum supercharge $\hat Q$ and the quantum constraint $\hat {\cal C}_3$ are restored without ambiguities. The requirement $\hat {\cal C}_3 \Psi = w^\alpha \hat p_\alpha \Psi = 0$ also eliminates the ambiguities in $\hat {\bar Q}$. The quantum Hamiltonian is then restored as the anticommutator $\{\hat Q, \hat {\bar Q}\}$. We obtain

\begin{empheq}[box = \fcolorbox{black}{white}]{align}
\lb{supercharges}
& \hat Q  \ =\ -i\chi^\alpha \frac \partial{\partial w^\alpha}, \quad \quad \quad
\hat {\bar Q} \ =\ -i {\cal X}  \frac {\partial^2}{\partial 
\chi^\beta \partial \bar w^\beta }\,, 
\\
\lb{super-H}
& \hat H \ =\  -{\cal X} \triangle -
\bar w^\alpha \chi^\alpha  \frac{\partial^2}{\partial \chi^\beta \partial \bar w^\beta} \,.
 \end{empheq}
The classical constraint ${\cal C}_1$ is equal to the Poisson bracket $\{Q, {\cal C}_3\}_{PB}$. To keep supersymmetry, we have to restore $\hat {\cal C}_1$ as the anticommutator $i\{\hat Q, \hat {\cal C}_3 \}$. Then $\hat {\cal C}_1 \ =\ w^\alpha \hat \pi_\alpha  + \chi^\alpha \hat p_\alpha$, which amounts to Weyl ordering of ${\cal C}_1$.

The only ambiguity not yet resolved dwells in $\hat {\cal C}_2$. We can write it as $\bar w^\alpha \hat {\bar \pi}_\alpha = s$ with any integer $s$ and supersymmetry will still be respected. In fact, this ambiguity reflects the fact that there are many different quantum problems with different integer (for even $n$) or half-integer (for odd $n$) charges $q$. If $q$ is defined as the physical charge of the gauge field as it appears in the Dirac operator and if we want to make contact with the literature where  inhomogeneous coordinates are mostly used, we have to resolve the ambiguity in $\hat {\cal C}_2$ using the Weyl  prescription:
 $$ \bar w^\alpha \bar \pi_\alpha \ \to \ \frac 12 (\bar w^\alpha \hat {\bar \pi}_\alpha + \hat {\bar \pi}_\alpha \bar w^\alpha ) \ =\ 
\bar w^\alpha \hat {\bar \pi}_\alpha - \frac {in}2\,.
 $$
This gives $\bar w^\alpha \hat {\bar \pi}_\alpha \ =\ -i( q -  n/2 ) \ =\ -is$ with $s$ defined in  \p{s-and-q}.

To reiterate, admissible wave functions $\Phi(w^\alpha, \bar w^\alpha; \chi^\alpha)$ satisfy the constraints
\begin{empheq}[box = \fcolorbox{black}{white}]{align}
\lb{superconstr}
&i\hat {\cal C}_1 \Psi \ =\ \left(w^\alpha \frac \partial {\partial w^\alpha} +  \chi^\alpha \frac \partial {\partial \chi^\alpha}\right) \Psi \ =\ 0, \nn
&i\hat {\cal C}_2 \Psi \ =\ \left(\bar w^\alpha \frac \partial {\partial \bar w^\alpha}\right )\Psi \ =\ s \Psi, 
\quad  i\hat{\cal C}_3 \Psi \ =\
 \left(w^\alpha \frac \partial {\partial\chi^\alpha}\right) \Psi \ =\ 0
\end{empheq}
with  $s = q - n/2$. The constraints are compatible with supersymmery due to 
\be
&&[\hat Q, \hat {\cal C}_1] = [\hat Q, \hat {\cal C}_2] \ = \ 0, \quad \{\hat Q, \hat {\cal C}_3\} = -i\hat {\cal C}_1, \nn
&&[\hat {\bar Q}, \hat {\cal C}_1] = [\hat {\bar Q}, \hat {\cal C}_2] =  \{\hat {\bar Q}, \hat {\cal C}_3\} \ =\  0.
 \ee
The supercharges \p{supercharges}  and the Hamiltonian \p{super-H} act on the Hilbert space with the inner product 
\be
\lb{super-measure}
\langle \Psi_1 | \Psi_2 \rangle = \int \bar \Psi_1(w, \bar w, \bar \chi) \Psi_2(w, \bar w, \chi)\, {\cal X}^n \,
\exp\left\{- {\cal X}- \frac{\chi^\gamma \bar \chi^\gamma}{{\cal X}}\right\} \prod_{\alpha = 0}^{n-1} dw^\alpha d \bar  w^\alpha d\chi^\alpha d\bar \chi^\alpha \,.
\ee
Then the supercharges $\hat Q$ and $\hat {\bar Q}$ are mutually conjugated\footnote{When flipping the derivative  $\partial/\partial w^\alpha$ in a matrix element $\langle \Psi_1 | \hat Q \Psi_2 \rangle$ and replacing simultaneously the operator $\chi^\alpha$ acting on the right by $-{\cal X} \partial/\partial \chi^\alpha$ acting on the left, the contribution coming from differentiating the measure is proportional to  $\hat {\cal C}_3 \Psi_1$ and  vanishes.} and the Hamiltonian is Hermitian.

 The wave functions in the sectors with different fermion numbers have the form
\be
\Psi^{F=0} \ =\ \Psi^{(0)} (w, \bar w),\qquad \Psi^{F=1} \ =\ \Psi^{(1)}_\alpha (w, \bar w) \chi^\alpha, \qquad \Psi^{F=2} \ =\ 
\Psi^{(2)}_{\alpha\beta}(w, \bar w) \chi^\alpha \chi^\beta,  \quad {\rm etc.}
 \ee
 These functions should obey the constraints \p{superconstr}. 
To make contact with the standard formulation of the supersymmetric   $\mathbb{CP}^{n-1}$ QM model, we have to resolve the constraints. In particular, the constraint 
  \be
\lb{C3}
 w^\alpha \frac \partial {\partial \chi^\alpha} \Psi \ =\ 0
  \ee
reduces the number of independent fermionic variables by 1, and the Hilbert space splits into $n$ sectors with the fermion numbers $F = 0,\ldots, n-1$. Indeed, an immediate corollary of the constraint \p{C3} is the absence of the sector $F=n$. The wave function in this sector would have the form
 \be
\Psi^{F=n}(w, \bar w, \chi) \ =\ A(w, \bar w) \, \varepsilon_{\alpha_1 \ldots \alpha_n} \chi^{\alpha_1} \cdots \chi^{\alpha_n}\,.
\ee
Substituting this in \p{C3}, we obtain
 \be
 A(w, \bar w)\,  \varepsilon_{\alpha_1 \ldots \alpha_n} \chi^{\alpha_1} \cdots \chi^{\alpha_{n-1}}w^{\alpha_n}  = 0 \quad \Rightarrow \quad A(w, \bar w) = 0\,.
\ee

In other words, the Hilbert space is still \p{PiT} despite the presence of an extra fermionic variable\footnote{From the mathematical standpoint, constraints~(\ref{superconstr}) correspond to a description of the holomorphic tangent bundle to $\CP^{n-1}$  in terms of the tautological bundle, cf.~\cite{Lam} or~\cite{Bykov3}. }.

\subsection{The top/bottom fermion sectors and the zero modes}

Before considering the sector of arbitrary fermion number $F$, let us analyze the sector of minimal and maximal fermion numbers, i.e. $F=0$ and $F= n-1$. First of all, the supersymmetric Hamiltonian simplifies considerably in  these sectors. Besides, it is precisely these sectors that potentially contain zero modes, which are of special geometric significance.

\subsubsection{Minimal fermion number: $F=0$}

In this sector, only the first term in the Hamiltonian \p{super-H} is relevant,
 and the problem is almost the same as in the bosonic case modulo a slight modification of the Hamiltonian\footnote{Now it has the form $-{\cal X}\triangle$ even in the presence of the monopoles, whereas our previous bosonic choice was the Bochner laplacian \p{Bochner}.} and the constraints. The wave functions can now be presented in the form 
\be
\lb{Psi-F0-s+}
\Psi^{F=0}(w^\alpha, \bar w^\alpha) \ =\ (\bar w^0)^s \Phi(z^j, \bar z^j) \quad {\rm with} \quad  z^j = \ \frac {w^j}{w^0}.
  \ee 
We derive: 
\be
\hat H_w \Psi \ =\ (\bar w^0)^s \hat H_z \Phi \ =\  (\bar w^0)^s (1 + \bar z^l z^l ) \left[ s z^j  \frac {\partial \Phi}{\partial  z^j } - \frac {\partial^2 \Phi}{\partial z^j \partial \bar z^k}(\delta^{jk} + z^j \bar z^k) \right].
\ee
As follows from \p{super-measure}, the wave functions $\Psi^{F=0}(w^\alpha, \bar w^\alpha)$  are normalized with the measure \p{norm-w}. This implies that the functions $\Phi(z^j, \bar z^j) $ are normalized with the measure
$$
\sim \frac {\prod_j dz^j d\bar z^j}{(1 + z^l \bar z^l)^{n+s}}\,. $$
To go over to the functions normalized with the standard measure \p{norm-z} and to the Hamiltonian acting on them, we have to perform a similarity transformation,
 \be
\lb{simil}
\Phi'  =   (1 + z^l \bar  z^l)^{-s/2} \Phi, \qquad \hat H_z'  = (1+ z^l \bar  z^l)^{-s/2} \hat H_z (1 + z^l \bar  z^l)^{s/2}.
 \ee
We  derive
\be
\lb{Ham-z-F0}
\hat H_z' \ =\ (1 +  z^l \bar z^l )\left[ -  (\delta^{jk} + z^j \bar z^k) \partial_j \bar \partial_k  + \frac s2 (z^j \partial_j -   
 \bar z^j  \bar \partial_j ) \right] + \frac {s^2}4 z^l \bar z^l + \frac s2(1-n).
\ee
The Hamiltonian \p{Ham-z-F0} has a form similar to the bosonic Hamiltonian \p{H-Phi} where one should replace $q$ by $s$ and add the  constant 
$s(1-n)/2$.

The spectrum is\footnote{For $n=2$ it was derived in \cite{Kim-Lee}.}
\be
E^{F=0}(n,L,L') \ =\ L(L' + n-1),
\ee
with $L \geq 0$ and  $ L' = L+s \geq 0$.

The multiplicity is given by Eq.\p{dim-q}. For $s \geq 0$, the spectrum includes 
\be
\lb{IW}
\#_0^{F=0} \ = \  \left(\begin{array}{c} s+n-1\\ n-1 \end{array}  \right) 
\ee
states with zero energy. They are the vacuum states of the full supersymmetric Hamiltonian. Their number \p{IW}, related to the {\it Witten index} of the $\mathbb{CP}^{n-1}$ SQM system, is well known~\cite{Kirchberg} (see also \cite{Ivanov}) and  can as well be easily computed  in our approach. For $s \geq 0$, the wave functions \p{Psi-F0-s+} describing the states of minimal energy are linear combinations of the monomials 
\be
\lb{monoms}
\Psi_{\rm vac} \ =\ \prod_{\alpha =0}^{n-1} (\bar w^\alpha)^{k_\alpha} \qquad {\rm with} \quad  \sum_\alpha^{n-1} k_\alpha = s\,.
 \ee
In other words, these are homogeneous polynomials of degree $s$ in $n$ variables. The dimension of this vector space is given by the binomial coefficient in Eq.\p{IW}.

The geometric meaning of this index is not the standard Euler characteristic of the manifold, as is the case for the SQM sigma model describing the de Rham complex \cite{Wit-Morse}, but rather  the {\it holomorphic} Euler characteristic of the vector bundle ${\cal E}$ describing the wave functions\footnote{In our applications, $\mathcal{E}=\mathcal{O}(s)$ is a power of the tautological line bundle.}, 
      \be
 \chi(\CPn, {\cal E}) \ =\ \sum_{i=0}^{n-1} (-1)^i\,{\rm dim}_{\mathbb{C}}H^i(\CPn, {\cal E})\,, 
 \ee  
which is  related to the  Hirzebruch-Riemann-Roch theorem \cite{Hirzebruch} (see also \cite{Alvarez,Smi-Hirz}).

If $s <0$, the states with zero energy in the sector $F=0$ are absent.

\subsubsection{Maximal fermion number: $F = n-1$.}

Consider now the sector $F = n-1$. The constraint \p{C3} dictates
\be
\lb{F:n-1}
\Psi^{F=n-1}(w, \bar w, \chi) \ =\ B(w, \bar w)\, \varepsilon_{\alpha_1 \ldots \alpha_n} \chi^{\alpha_1} \cdots \chi^{\alpha_{n-1}} w^{\alpha_n} \,.
\ee
Two other constraints give
\be 
\lb{B1B2}
w^\alpha \frac \partial {\partial w^\alpha} B \ =\ -nB , \qquad   \bar w^\alpha \frac \partial {\partial \bar w^\alpha}B\ =\ sB\,.
\ee
Consider the action of the Hamiltonian \p{super-H} on the function \p{F:n-1}. There are two contributions: {\it (i)} the term $\sim - {\cal X} \triangle  B$; {\it (ii)} the cross term. The latter gives
\be
- \bar w^\beta \left[ \left(w^\beta \frac \partial {\partial w^\gamma} + \chi^\beta \frac \partial {\partial \chi^\gamma }   \right)   \varepsilon_{\alpha_1 \ldots \alpha_n} \chi^{\alpha_1} \cdots \chi^{\alpha_{n-1}} w^{\alpha_n}  \right] \frac {\partial B}{\partial \bar w^\gamma}
\ee
Simple algebraic manipulations allow one to observe: 
$$[\cdots]^\beta_\gamma \ =\ \delta^\beta_\gamma \, \varepsilon_{\alpha_1 \ldots \alpha_n} \chi^{\alpha_1} \cdots \chi^{\alpha_{n-1}} w^{\alpha_n}\,.$$
It follows that 
\be
{\rm \it The\ cross\ term} \ =\ - \bar w^\beta \frac {\partial B}{\partial \bar w^\beta} \,  \varepsilon_{\alpha_1 \ldots \alpha_n} \chi^{\alpha_1} \cdots \chi^{\alpha_{n-1}} w^{\alpha_n} \ =\ - sB \, \varepsilon_{\alpha_1 \ldots \alpha_n} \chi^{\alpha_1} \cdots \chi^{\alpha_{n-1}} w^{\alpha_n}
 \ee
in virtue of the second constraint in \p{B1B2}. 

Thus, we are in a position to solve the spectral problem
\be
\lb{Ham-w-n-1}
(- {\cal X} \triangle - s) B \ =\ EB
\ee
with the constraints \p{B1B2}. We write
\be
B(w, \bar w) \ =\ \frac {(\bar w^0)^s} {(w^0)^n} \, \Phi(z^j, \bar z^j)\,.
\ee

 Acting on this by the Hamiltonian in \p{Ham-w-n-1} and performing a proper similarity transformation to arrive at the Hilbert space with the measure \p{norm-z}, 
  \be
\lb{simil1}
\Phi'  =   (1 + z^l \bar  z^l)^{(n-s)/2} \Phi, \qquad \hat H_z'  = (1+ z^l \bar  z^l)^{(n-s)/2} \hat H_z (1 + z^l \bar  z^l)^{(s-n)/2},
 \ee
we derive the Hamiltonian 
\be
\lb{Ham-z-n-1}
\hat H_z =  (1 +  z^l \bar z^l )\left[ \frac {s+n}2 ( z^j \partial_j  -
 \bar z^j \bar \partial_j ) -  (\delta^{jk} + z^j \bar z^k) \partial_j \bar \partial_k \right]\nn
 + \frac {(s+n)^2}4 \bar z_k z_k +  \frac {s+n}2(n-1).
\ee
It is instructive to rewrite this Hamiltonian and also the Hamiltonian in \p{Ham-z-F0} in terms of the monopole charge $q$. We derive
 \be
\hat H'_z(F=0) \ =\ (1 +  z^l \bar z^l )\left[ \frac {q-n/2}2  (z^j \partial_j  - \bar z^j \bar\partial_j) -  (\delta^{jk} + z^j \bar z^k) \partial_j \bar \partial_k \right] \nn
+ \frac 14 \left(q - \frac n2 \right)^2
 z^l\bar z^l -   \frac {q-n/2}2 (n-1)\,, \nn 
\hat H'_z(F=n-1) \ =\ (1 +  z^l \bar z^l )\left[ \frac {q+n/2}2  ( z^j \partial_j   - \bar z^j \bar\partial_j) -  (\delta^{jk} + z^j \bar z^k) \partial_j \bar \partial_k \right] \nn
+ \frac 14 \left(q + \frac n2 \right)^2
 z^l\bar z^l  +  \frac {q+n/2}2 (n-1)\,.
\ee
We observe that these Hamiltonians differ from one another by the sign of $q$ and the interchange $z \leftrightarrow \bar z$. In other words, the spectrum of the Hamiltonian in the sector $F=n-1$ is the same as in the sector $F=0$ for the system with the opposite monopole charge.  Also the wave functions $\Phi'$ there are the same up to the exchange $z \leftrightarrow \bar z$. Accordingly, 
\be
B(w, \bar w, q) \ =\ \Psi^{F=0}(\bar w, w, -q) {\cal X}^{q- n/2}\,.
\ee
In particular, for negative $q$,  the system has no zero modes in the sector $F=0$, but they may appear in the sector $F = n-1$. This happens if $q \leq -n/2$  (or $s \leq n$). 
 Their wave functions read
 \be
\Psi_{\rm vac} \ =\ \frac {\varepsilon_{\alpha_1\ldots \alpha_n}\chi^{\alpha_1} \cdots \chi^{\alpha_{n-1}} w ^{\alpha_n}}{{\cal X}^{n/2-q}} \prod_{\alpha=0}^{n-1} (w^\alpha)^{k_\alpha}  \qquad {\rm with} \qquad 
\sum_\alpha^{n-1} k_\alpha = -q - \frac n2 \,.
  \ee
There are 
\be
\lb{number-modes}
\#^{F = n-1}_0 \ =\ \left(   \begin{array}{c} -s-1 \\ n-1 \end{array} \right)
 \ee
such modes. 
In the range $-n/2 < q < n/2$, the zero modes are absent, supersymmetry is spontaneously broken, and the Witten index,
 \be
\lb{IW-def}
  I_W \ =\ n_B^{(0)} - n_F^{(0)}  
 \ee
($n_B^{(0)}$ is the number of zero modes with even fermion charge and $n_F^{(0)}$ is the number of zero modes with odd fermion charge), is equal to zero.
For $q \geq n/2$ or $q \leq -n/2$, the zero modes are present and 
  their number  is given by the 
binomial coefficient in Eq.\p{IW} with $s = q - n/2$ for positive $q$ and $s \to -n-s = -(q + n/2)$ for negative $q$. If $n$ is odd, the Witten index is equal to this number both for positive and negative $q$, whereas, for even $n$ and $q \leq -n/2$, it acquires the extra minus factor.\footnote{Note in passing that basically the same formula describes the Witten index in the 3-dimensional supersymmetric Yang-Mills-Chern-Simons theory, and such a coincidence is not accidental \cite{3d}.}

\subsection{The full spectrum}

The zero modes of the supersymmetric Hamiltonian have been well known, but our technique allows one to obtain the whole spectrum.
We have already done so in the sectors $F=0$ and $F =n-1$. For $\mathbb{CP}^1$, there is nothing else.

The results of the previous section also allow one to find the full spectrum  for $\mathbb{CP}^2$. The states in the intermediate sector $F=1$  are superpartners of  the states with either $F=0$ or $F=2$, and their wave functions can be found by acting with $\hat Q$ or $\hat {\bar Q}$ on the corresponding states. There are therefore two different chains of $F=1$ states: the states $\Psi_+^{F=1}$ representing the upper components of the supersymmetric doublets and  the states $\Psi_-^{F=1}$ representing the lower components. The energies of the lowest states in the sectors $F=0,1,2$ are given in Table~\ref{spec-CP2}.

\begin{table}[h!]
\caption{Energies of the    lowest states on $\mathbb{CP}^2$ }
\label{spec-CP2}
\begin{center}
\tabcolsep=0.5em
\begin{tabular}{|l||c|c|c|c|c|c|c|}
\hline
 q &-5/2 & -3/2 & -1/2 & 1/2 & 3/2 & 5/2\\
\hline 
 $\Psi^{F=0}$ &  8  & 6 & 4 & 2 & 0 & 0\\
 \hline
$\Psi_+^{F=1}$ &  8  & 6 & 4 & 2 & 3 & 4 \\
\hline
 $\Psi_-^{F=1}$ &  4  & 3 & 2 & 4 & 6 & 8 \\
\hline
 $\Psi^{F=2}$ &  0  & 0 & 2 & 4 & 6 & 8 \\
 \hline
\end{tabular}
\end{center}
\end{table}

For ${n \geq 4}$, this procedure gives {\it some} states in the sectors $F=1$ and $F = n-2$,   but not all of them. For example, for $\mathbb{CP}^3$, there are supersymmetric doublets involving  the states in the sectors $F=0,1$ and $F=2,3$, but also the doublets with the states in the sectors $F = 1,2$, which cannot be ``reached" from the leftmost and the rightmost sectors.  Still, the wave functions of the states in the ``intermediate" sectors $F = 1, \ldots, n-2$ can be found if we concentrate on the {\it lower} components $\Psi_-$ of the supersymmetric doublets.

Our general ansatz for the wave functions is a straightforward generalization of the bosonic one~(\ref{Psi-Ans-q}):
\begin{empheq}[box = \fcolorbox{black}{white}]{align}
\lb{PsiFermMon}
\quad \Psi(w, \bar w, \chi) \ =\ \frac 1{{\cal X}^{M}} A_{\alpha_1 \ldots \alpha_L| \bar\beta_1 \ldots \bar\beta_{L'}| \gamma_1 \ldots \gamma_F} w^{\alpha_1} \cdots w^{\alpha_L} \, 
 \bar w^{\beta_1} \cdots \bar w^{\beta_{L'}}\,\chi^{\gamma_1}\cdots \chi^{\gamma_F}\quad
\end{empheq}
The bosonic constraints imply
\be
M=L+F,\quad\quad L'-M=s\,.
\ee
Consider the states $\Psi_-$, which by definition are annihilated by the action of the supercharge $\hat{\bar Q}$ in Eq.\p{supercharges}: 
\be
\lb{QbarPsi}
\hat {\bar Q} \Psi_- = 0\,.
 \ee
Hence they are also annihilated by the action of the second term in the Hamiltonian \p{super-H} and only the first term is relevant. We can then repeat the reasoning of Theorem 1 and conclude that, for the function \p{PsiFermMon} satisfying the condition \p{QbarPsi} to be an eigenstate of the Hamiltonian,   the tensor $A$ 
should satisfy the $\alpha\beta$ tracelessness condition,\footnote{This  is not so for the states  $\Psi_+$, which are not annihilated  by $\hat {\bar Q}$ and the second term in the Hamiltonian \p{super-H} cannot be disregarded in the action $\hat H \Psi_+$. This means in particular that the tensor $A$ for these states need not satisfy the tracelessness condition \p{albet-trace}.  And it never does! (See Appendix~\ref{updoubletapp} for a proof.)}
 \be
\lb{albet-trace}
A_{\delta \ldots \alpha_L| \bar\delta \ldots \bar\beta_{L'}| \gamma_1 \ldots \gamma_F} \ =\ 0\,.
 \ee
The energies can then be found by a direct action of the Hamiltonian $\hat H \equiv -{\cal X} \triangle$ on  \p{PsiFermMon}, where the fermion factors play the role of ``spectators". We derive
\begin{empheq}[box = \fcolorbox{black}{white}]{align}
\lb{E-LL'F}
E_-(n,L,L',F) \ =\ (L+F)(L' -F + n -1)\,,
\end{empheq}
where $F = 0,\ldots, n-2$, $L' = L+F+s \geq 0$. When $F=0$, the integer $L$ may also be positive or zero, but for nonzero $F$, $L$ is necessarily positive --- this is a corollary of the fermionic constraint $w^\alpha \partial \Psi/ \partial \chi^\alpha = 0$. It cannot be fulfilled if $\Psi$ has no $w$ factors. 
The formula \p{E-LL'F} describes also the zero modes in the sector $F=n-1$, which exist when $L' = 0$ and $s+n \leq 0$. These are  the only states in the sector $F=n-1$ that are annihilated by $\hat {\bar Q}$.

The  eigenstates of the Hamiltonian belong to  degenerate in energy $SU(n)$ multiplets. What remains is to describe these multiplets explicitly.
  \begin{figure}
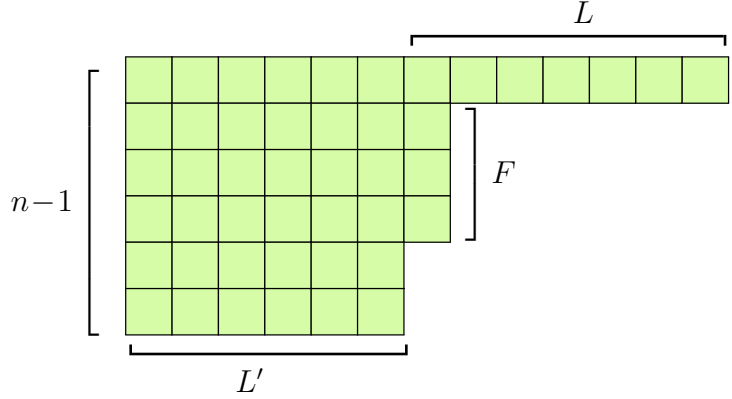

\centering
\begin{Overpic}[]{
\ytableausetup{boxsize=0.6cm}
\ytableaushort
{}
* [*(lbluecol)]{13,7,7,7,6,6}
}
\put(-15,20){$n\!-\!1 \left[
\begin{array}{rrr}
\\
\\
\\
\\
\\
\\
\\
\end{array}
\right. $}
\put(49,46){$\overbracket[1pt][0.6ex]{\hspace{4.2cm}}_{}$}
\put(75,50){$L$}
\put(4,-1){$\underbracket[1pt][0.6ex]{\hspace{3.7cm}}_{}$}
\put(21,-9){$L'$}
\put(53,24.5){$ \left.
\begin{array}{rrr}
\vspace{-0.2cm}\\
\\
\\
\\
\end{array}
\right] F $}
\end{Overpic}
\vspace{0.5cm}
\caption{Young diagram for the multiplet of supersymmetric monopole harmonics.}
\label{young2}
\end{figure}

\begin{thm}\lb{Young-F}
\vspace{0.3cm} 
The states $\Psi_-$ defined by~(\ref{QbarPsi}), which additionally satisfy the constraints~(\ref{superconstr}) and are eigenstates of the  Hamiltonian \p{super-H}, fall into one of the following classes:
\begin{itemize}
\item $F \neq 0$ and $F\neq n-1$,  $L\geq 1, L'=L+F+s\geq 0$
\item $F = 0,  L\geq 0, L'=L+s\geq 0$
\item $F=n-1$, $L'=0$, $L=-(n-1+s)\geq 1$
\end{itemize}
The energies of the states are given by~\p{E-LL'F}. Their $SU(n)$-representations are described by the Young diagram shown in Fig.~\ref{young2}.

For $F=0, L=0$ or $F=n-1, L'=0$ these states  are zero modes.

\vspace{0.3cm}
Unless the state $\Psi_-$ is a zero mode, it has a superpartner $\Psi_+$ with the same energy and furnishing the same representation of $SU(n)$. The states $\Psi_{\pm}$ exhaust all the eigenstates of the Hamiltonian.

\vspace{0.2cm}

\end{thm}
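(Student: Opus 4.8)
\emph{Strategy.} The plan is to use the $\mathcal N=2$ superalgebra $\hat Q^{2}=\hat{\bar Q}^{2}=0$, $\hat H=\{\hat Q,\hat{\bar Q}\}$, $\hat Q^{\dagger}=\hat{\bar Q}$, together with the fact that both supercharges~\eqref{supercharges} commute with the $SU(n)$ action on $w^{\alpha},\chi^{\alpha}$. From $\langle\Psi|\hat H\Psi\rangle=\|\hat Q\Psi\|^{2}+\|\hat{\bar Q}\Psi\|^{2}$ one obtains the usual Hodge-type splitting: for every $E>0$ the eigenspace $\mathcal H_{E}\subset\mathcal H$ decomposes $SU(n)$-equivariantly as $\ker(\hat{\bar Q}|_{\mathcal H_{E}})\oplus\hat Q(\mathcal H_{E})$, with $\hat Q$ restricting to an $SU(n)$-module isomorphism of the first summand onto the second. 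The first summand consists of the $\Psi_{-}$, the second of the $\Psi_{+}=\hat Q\Psi_{-}$; they are automatically degenerate and carry the same $SU(n)$-representation, and together they span $\mathcal H_{E}$. Hence it suffices to classify the $\hat{\bar Q}$-closed eigenstates and, separately, the zero modes (already found in the previous subsections); this gives the ``superpartner'' and ``exhaust all eigenstates'' assertions. It also settles the intermediate sectors: by the energy formula below a $\Psi_{-}$ with $1\le F\le n-2$ has $L\ge1$, $L'\ge0$ and energy $(L+F)(L'-F+n-1)>0$, so there are no zero modes there, while for $F=0$ a zero mode needs $L=0$ and for $F=n-1$ it needs $L'=0$.

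\emph{The ansatz.} Next I would fix the form of a $\hat{\bar Q}$-closed eigenstate. In the fermion-number-$F$ sector, smoothness on $\CPn$, finiteness of the norm~\eqref{super-measure} and the bosonic constraints in~\eqref{superconstr} force a state to be a finite linear combination of terms of the type~\eqref{PsiFermMon} with $M=L+F$ and $L'=L+F+s$, exactly by the reasoning used for the bosonic problem in Theorem~\ref{trace}; and the fermionic constraint~\eqref{C3}, $w^{\alpha}\partial\Psi/\partial\chi^{\alpha}=0$, translates into the ``hook'' condition that symmetrising any $\gamma$-index of $A$ into the block $\alpha_{1}\ldots\alpha_{L}$ annihilates it (in particular $L\ge1$ whenever $F\ge1$, since otherwise $A$ has no $\alpha$-index to absorb the contracted $\gamma$).

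\emph{Imposing $\hat{\bar Q}\Psi_{-}=0$ and the eigenvalue equation.} Acting with $\hat{\bar Q}=-i\mathcal X\,\partial^{2}/(\partial\chi^{\beta}\partial\bar w^{\beta})$ on~\eqref{PsiFermMon} produces two contributions: the one in which $\partial/\partial\bar w^{\beta}$ hits the prefactor $\mathcal X^{-M}$ vanishes identically by the hook condition, while the other is proportional to the trace of $A$ over an upper $\gamma$- and a lower $\bar\beta$-index. Thus $\hat{\bar Q}\Psi_{-}=0$ is equivalent to $A$ being traceless on every $\gamma\bar\beta$ pair. On such states only the term $-\mathcal X\triangle$ of~\eqref{super-H} acts, and repeating the trace-subtraction argument of Theorem~\ref{trace} with the fermionic factors as spectators shows that an eigenstate requires, in addition, the $\alpha\bar\beta$-tracelessness~\eqref{albet-trace}. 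A short computation of $-\mathcal X\triangle$ on $\mathcal X^{-(L+F)}$ times a polynomial of bidegree $(L,L')$ in $(w,\bar w)$ with doubly traceless coefficient tensor then gives $E_{-}(n,L,L',F)=(L+F)(L'-F+n-1)$, i.e.~\eqref{E-LL'F}.

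\emph{The $SU(n)$-multiplet and the admissible ranges.} Finally one identifies the representation. The tensors $A$ which are symmetric in the $\alpha$'s, antisymmetric in the $\gamma$'s, obey the hook condition and are traceless on both index types form a single irreducible $GL(n)$-module: the upper indices carry the hook representation $V_{(L,1^{F})}$ (the hook condition cuts $\mathrm{Sym}^{L}\otimes\Lambda^{F}=V_{(L+1,1^{F-1})}\oplus V_{(L,1^{F})}$ down to its second summand), the $L'$ lower symmetric indices carry $\overline{\mathrm{Sym}^{L'}}$, and the doubly traceless part of their tensor product is the Cartan component $V_{\lambda}$, whose Young diagram $\lambda$ is the one in Fig.~\ref{young2}: a first row of length $L+L'$, the next $F$ rows of length $L'+1$, and the remaining $n-2-F$ rows of length $L'$. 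At $F=0$ the middle block is empty and one recovers the bosonic labels $(L,0,\ldots,0,L')$. At $F=n-1$ the would-be superpartner $\hat Q\Psi_{-}$ lies in the empty fermion sector $F=n$, so $\hat Q\Psi_{-}=0$; with $\hat{\bar Q}\Psi_{-}=0$ this forces $\hat H\Psi_{-}=0$, hence $L'=0$ by~\eqref{E-LL'F}, the hook $V_{(L,1^{n-1})}$ collapses (its height-$n$ column being trivial for $SU(n)$) to $\mathrm{Sym}^{L-1}$, and these are exactly the $F=n-1$ zero modes. Non-triviality of $V_{\lambda}$ fixes the admissible ranges of $(L,L',F)$ stated in the three bullets, and its dimension reproduces the multiplicities~\eqref{dim-q} (and~\eqref{IW},~\eqref{number-modes} in the zero-mode cases). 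I expect the main obstacle to be precisely this representation-theoretic step --- proving that the doubly traceless hook-type tensor space is irreducible and equals the stated Young-diagram module, which needs a genuine Littlewood--Richardson / highest-weight argument with careful bookkeeping of the $SU(n)$ identifications (height-$n$ columns), most delicately in the boundary sector $F=n-1$; a secondary point requiring care, though parallel to the bosonic treatment, is the completeness of the ansatz~\eqref{PsiFermMon}, i.e.\ that no further normalizable eigenstates exist.
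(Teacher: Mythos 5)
Your proposal follows essentially the same route as the paper: the same ansatz \p{PsiFermMon} with $M=L+F$, $L'=L+F+s$, the same translation of $\hat{\cal C}_3\Psi=0$ into the hook condition (whence $L\geq 1$ for $F\geq 1$), the same identification of $\hat{\bar Q}\Psi_-=0$ with the $\beta\gamma$ tracelessness \p{betgam-trace} (the denominator contribution being killed by $\hat{\cal C}_3$), the same derivation of \p{E-LL'F} with the fermions as spectators, and the same treatment of the boundary sectors $F=0$ and $F=n-1$. The only structural difference is that you frontload the Hodge-type splitting of each positive-energy eigenspace into $\ker\hat{\bar Q}$ and its $\hat Q$-image to obtain the superpartner and completeness claims, which the paper states more informally in a footnote. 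The single step you explicitly defer --- that the doubly traceless hook-type tensors form precisely the irreducible module of Fig.~\ref{young2} --- is exactly where the paper's key argument lives: in the Littlewood--Richardson expansion \p{tensprodyoung}, every summand other than the Cartan component is obtained by gluing at least one box of the hook into the $n$th row, and such a gluing is an $\alpha\beta$ or $\beta\gamma$ contraction, which vanishes by \p{albet-trace} and \p{betgam-trace}, so only the highest-weight term survives (and for $F=n-1$ a bottom-row gluing is unavoidable unless $L'=0$). Supplying that observation closes the one gap you flagged and makes your proof match the paper's.
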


\begin{proof}

An arbitrary function \p{PsiFermMon} comprises a symmetric tensor in $\bar{w}$'s, another symmetric tensor in $w$'s and a skew-symmetric tensor in the $\chi$'s.   In terms of $SU(n)$ representations, this corresponds to the tensor product:
\be\label{tripletensprod}
\ytableausetup{boxsize=0.3cm}
\ytableaushort
{}
* [*(lbluecol)]{6,6,6,6,6,6}
\otimes  \ytableaushort
{} * [*(lbluecol)]{7}
\otimes \ytableaushort
{} * [*(lbluecol)]{1,1,1} \,.
\ee
This product is  expanded into a sum of many irreps. To pick out the representation that describes the eigenstates of our Hamiltonian, we should take into account the constraints \p{superconstr}, the condition \p{QbarPsi} and the traceless condition \p{albet-trace}.

Let us start with the last two factors in \p{tripletensprod}. The condition $\hat{\cal C}_3 \Psi = 0$ dictates that, if one replaces one of $\chi$'s with a $w$, the result should be zero, i.e. the fermions are anti-symmetrized with the bosons. As a result, this condition ensures that only the following representation survives in the tensor product:
\be
\ytableaushort
{} * [*(lbluecol)]{7}
\otimes \ytableaushort
{} * [*(lbluecol)]{1,1,1} \quad \mapsto \quad 
\ytableaushort
{} * [*(lbluecol)]{7,1,1,1}
\ee
It remains to compute the tensor product of this hook with the big rectangular box in~(\ref{tripletensprod}).

In computing the tensor product, we will take into account the condition $\hat {\bar Q} \Psi = 0$. Let us see what additional constraints on the wave function it  imposes. The derivative $\dd/\dd \bar w^\alpha$ in $\hat{\bar Q}$ may act on the numerator or on the denominator of \p{PsiFermMon}. The action on the numerator gives $\beta\gamma$ traces. The action on the denominator gives the structure 
$w^\alpha \dd \Psi/\dd \chi^\alpha$; this contribution vanishes due to the constraint $\hat {\cal C}_3 \Psi  = 0$. 
As a result, the tensor $A$ must satisfy, on top of the condition \p{albet-trace}, also the condition
 \be
\lb{betgam-trace}
A_{\alpha_1 \ldots \alpha_L| \bar\delta \ldots \bar\beta_{L'}| \delta \ldots \gamma_F} \ =\ 0\,.
 \ee

 Now consider the product  {\sl box} $\times$ {\sl hook}. By the rules for the products of Young diagrams~\cite{Georgi} we should take the small boxes from the  hook and glue them to the box either on the right or on the bottom. While doing so, we should abide by certain rules,  which however are irrelevant for us at the moment. As a result, we obtain

\be\label{tensprodyoung}
\ytableausetup{boxsize=0.2cm}
&& \ytableaushort
{}
* [*(lbluecol)]{6,6,6,6,6,6} 
\otimes 
\ytableaushort
{} * [*(bluecol)]{7} * [*(pinkcol)]{1,1,1,1}\quad = \quad \ytableaushort
{}
* [*(lbluecol)]{6,6,6,6,6,6}  * [*(bluecol)]{13} * [*(pinkcol)]{7,7,7,7} \quad \oplus  \quad
\ytableaushort
{}
* [*(lbluecol)]{6,6,6,6,6,6}  * [*(bluecol)]{12} * [*(pinkcol)]{7,7,7,7} * [*(bluecol)]{1,1,1,1,1,1,1} \quad \oplus \quad \\ \nonumber
&& \oplus \quad \ytableaushort
{}
* [*(lbluecol)]{6,6,6,6,6,6}  * [*(bluecol)]{11} * [*(pinkcol)]{7,7,7,7} * [*(bluecol)]{1,1,1,1,1,1,2}
\quad \oplus \quad \cdots \quad \oplus \quad 
\ytableaushort
{}
* [*(lbluecol)]{6,6,6,6,6,6}  * [*(bluecol)]{13} * [*(pinkcol)]{7,7,7} * [*(pinkcol)]{1,1,1,1,1,1,1} \quad \oplus \quad
\ytableaushort
{}
* [*(lbluecol)]{6,6,6,6,6,6}  * [*(bluecol)]{12} * [*(pinkcol)]{7,7,7} * [*(bluecol)]{1,1,1,1,1,1,1} * [*(pinkcol)]{2,2,2,2,2,2,2} \quad \oplus \quad \cdots
\ee
Note however that placing a blue or red box in the last ($n$th) row means taking a trace, either an $\alpha\beta$ trace in case of a blue box, or a 
$\beta\gamma$ trace in case of a red one. These traces vanish, however, and all the  representations where at least one of the boxes is glued on the bottom drop out. Then the only representation that remains (assuming $F\leq n-2$) is the one where the whole hook is glued on the right of the large rectangle,  the first one in the r.h.s. of~(\ref{tensprodyoung}), which coincides with  the one shown in Fig.~\ref{young2}.

In the simplest case $F=0$, the Young diagram acquires the form (\ref{DynkinBosonic}), the same as in the nonsupersymmetric case, so that only the left and right nodes of the Dynkin diagram carry non-zero labels. If, in addition, $L=0$, the corresponding wave function~(\ref{PsiFermMon}) has no dependence on $w^\alpha$ and is therefore annihilated by $\hat Q$. Therefore the state with $F=0$ and $L=0$ is a zero mode. 

Finally, in the case $F=n-1$ there are `too many' red boxes in~(\ref{tensprodyoung}), and one of them would be inevitably glued to the bottom of the rectangle, so that the resulting wave function would be zero by the constraints. The only exception is when there is no rectangle altogether, i.e. $L'=0$, which is the third case of the theorem. Since the supercharge $\hat Q$ raises the fermion number by one, and there are no admissible states with $F=n$, one concludes that these states are annihilated by both $\hat Q$ and $\hat {\bar Q}$, implying that they are zero modes.
The corresponding Young diagram is a row including $L-1 = -(s+n)$ boxes. The dimension of this representation is given by Eq. \p{number-modes}.

So far we discussed the states $\Psi_-$ representing lower components of the supersymmetric doublets. Whenever $\Psi_-$ is \emph{not} a zero mode, one can construct its superpartner $\Psi_+:= \hat Q \Psi_-$. By supersymmetry, it has the same energy $E\neq 0$ and   furnishes
the same $SU(n)$ multiplet.\footnote{It may be worth reminding that the states $\Psi_+$ and $\Psi_-$ annihilated either by $\hat Q$ or by $\hat {\bar Q}$ constitute the complete spectrum of a supersymmetric quantum system.} In Appendix~\ref{updoubletapp}, we will illustrate this by analysing the relevant Young diagrams. If $\Psi_-$ is a zero mode, it is annihilated by both supercharges, so that it has no superpartner.
\end{proof}

\vspace{0.3cm}
It follows from the proof that the only representation that survives in the tensor product~\p{tensprodyoung}, once all constraints are taken into account, is the one whose highest weight is equal to the sum of the highest weights of the factors.

Generically (for $F>0$) it has the form
\be
\quad a_{i = 1,\ldots, n-1} = (L-1, \underbracket[1pt][0.6ex]{\;0, \;\ldots, \;0,}_{F-1} \;1, \underbracket[1pt][0.6ex]{\;0, \;\ldots, \;0,}_{n-3-F} \;L') \quad
\ee
One can then compute the dimensions of the representations either by the hook formula from the Young diagram, or by the Weyl formula \p{Weyl} using the Dynkin labels.

\subsection{Dirac operator and its spectrum}\label{Diracsec}

 As we have already mentioned on p. \pageref{Dirac},   
 on a K\"ahler manifold, the Dirac operator is isomorphic to the sum of the supercharges (which are in turn isomorphic to the exterior holomorphic derivative in the Dolbeault complex and its conjugate):
\be
{\cal D}\!\!\!\!/  \ \simeq \ \hat Q + \hat {\bar Q}
\ee
Then the supersymmetric Hamiltonian  $H=\{Q, \hat {\bar Q}\}_+$ is isomorphic to the square of the Dirac operator\footnote{In three dimensions, this operator, furnished with the factor $1/2m$, is known to physicists as the 
{\it Pauli Hamiltonian} describing the motion of electron in external magnetic field.}
 $(/\!\!\!\!D)^2$.
In the main part of Sect. 3, we have determined the eigenvalues $E$ of the supersymmetric Hamiltonian on $\CP^{n-1}$ in a field of a $\mathbb{CP}$ monopole, and this immediately gives us the Dirac eigenvalues: 
\be
\lambda_D \ =\  \pm  \sqrt{E}
 \ee
The eigenfunctions of $/\!\!\!\!D$  are the bispinors isomorphic to the linear combinations  of the functions $\Psi_-$ and  $\Psi_+$  found above. The zero modes of \,$/\!\!\!\!D$, if they exist, map to either $\Psi_-$ or $\Psi_+$, depending on the sign of the monopole charge, and nonzero modes map to  
  \be
\Psi_1  = \Psi_- + \Psi_+\,,\quad {\rm and} \quad \Psi_2 = \Psi_--\Psi_+
\ee
with $\Psi_+ =   /\!\!\!\!D \Psi_-/\sqrt{E}$.

 The eigenstates of $/\!\!\!\!D$ form the $SU(n)$ multiplets described by the Young diagram in Fig.~\ref{young2}.

While preparing our paper for publication, we came across the paper \cite{Dolan}, where 
the spectrum of the Dirac operator on  {\it fuzzy} $\CP^{n-1}$ was found (see also the much earlier paper \cite{Seifarth} where this spectrum was determined on $\CP^{2k+1}$ in the absence of the gauge field).  Their expression for the eigenvalues of $(/\!\!\!\!D)^2$ coincides with our formula \p{E-LL'F} and the $SU(n)$ multiplets are described by the same Young diagram.

\section{Conclusions and outlook}

In the first half of the  paper we reproduced the known results for the spectrum of the Laplacian on $\CP^{n-1}$ using the {\it homogeneous} coordinates $w^\alpha$ to describe its geometry. We did so also for the Bochner Laplacian \p{Bochner} which includes the background gauge field \p{CP-monopole-w} representing   a straightforward generalization of the Dirac monopole on $\CP^1\simeq S^2$. 

The $\mathbb{C}^{\ast}$ invariance of the metric \p{metr-homogen} is reflected in the gauge symmetry of the corresponding dynamical system. In the quantum setup, the symmetry is implemented via the constraints \p{cons-bos-q}. We use a natural polynomial ansatz \p{Psi-Ans-q} for the wave functions, in which case the constraints can be easily resolved. The eigenstates form degenerate multiplets corresponding to the irreducible representations of $SU(n)$ with the  Young diagrams drawn in Fig. 1.

In Sect.~3, we   applied the same machinery to the $\mathcal{N}=2$ supersymmetric quantum-mechanical 
$\CP^{n-1}$ sigma model involving chiral superfields and describing the Dolbeault complex. In contrast to a more widely known  $\mathcal{N}=4$ model with real superfields which describes the K\"ahler -- de Rham complex, the Dolbeault complex can be twisted by a background gauge field. Choosing the background in the form of the $\CP$ monopole \p{CP-monopole-w}, we resolved the spectral problem, finding the eigenvalues and the wave functions expressed in the form 
\p{PsiFermMon}. 

The twisted Dolbeault complex is isomorphic to the Dirac complex, and our results for the eigenvalues and the structure of $SU(n)$ multiplets conform with the results of Ref. \cite{Dolan}.

The use of homogeneous coordinates brings about considerable simplifications, and one can apply it for studying other problems. One possibility is to consider in these terms the K\"ahler -- de Rham $\mathcal{N}=4$ supersymmetric quantum mechanics.   It arises after performing the dimensional reduction of $\mathcal{N}=(2, 2)$ two-dimensional sigma models studied in ~\cite{Bykov3}. Besides, this method could be  applied to solve similar spectral problems on more complicated complex manifolds such as homogeneous manifolds or even to the wider class of  models with quiver phase spaces  (see Ref.~\cite{Bykov2} for an overview).

\vspace{1cm}
\textbf{Acknowledgments.} The work of D.~Bykov was performed at the Steklov International Mathematical Center and supported by the Ministry of Science and Higher Education of the Russian Federation (agreement no. 075-15-2022-265; section 1 of the present paper). Sections 2, 3, 4 of the present paper were written with the support of the Russian Science Foundation grant №~22-72-10122 (\href{https://rscf.ru/en/project/22-72-10122/}{\emph{https://rscf.ru/en/project/22-72-10122/}}). We would also like to thank 
E. Ivanov and A. Nersessian for illuminating discussions, and the Institut des Hautes \'Etudes Scientifiques, where our collaboration emerged, for hospitality.

\vspace{1cm}
\appendix
\section{Multiplicities of degenerate multiplets}
\setcounter{equation}0
\def\theequation{A.\arabic{equation}}
We address the reader to the excellent Slansky's review including extensive group theory information and the tables where the dimensions and other properties of many $SU(n)$ multiplets are listed \cite{Slansky}. Here we only wish to illustrate the derivation of the formula \p{dim-q} for the number of the states \p{Psi-Ans-q}  in a multiplet at the level $L$ in the bosonic $\mathbb{CP}^{n-1}$ model at the presence of a $\mathbb{CP}$ monopole of charge $q \geq 0$. As was mentioned, this number coincides with the dimension of the representation of $SU(n)$ with the highest weight $\Lambda = (L, n-3 \ {\rm zeros}, L+q)$.

The dimension of a representation of with the highest weight $\Lambda$ is given by the Weyl formula,
\be
\lb{Weyl}
\#(\Lambda) \ =\ \prod_{\alpha \in \Delta^+}  \frac {(\Lambda + \delta, \alpha)}{(\delta, \alpha)}\,,
\ee
where the product is done over all positive roots of the algebra and $\delta$ is the highest root.
In our case, 
\be\Lambda = (L, \underbrace{0,\ldots, 0}_{n-3},  L+q) \quad \mbox {\rm and} \quad
 \delta = (\underbrace{1,\ldots,1}_{n-1})\,.
\ee

  \begin{figure}[h]
   \begin{center}
 \includegraphics[width=.40\textwidth]{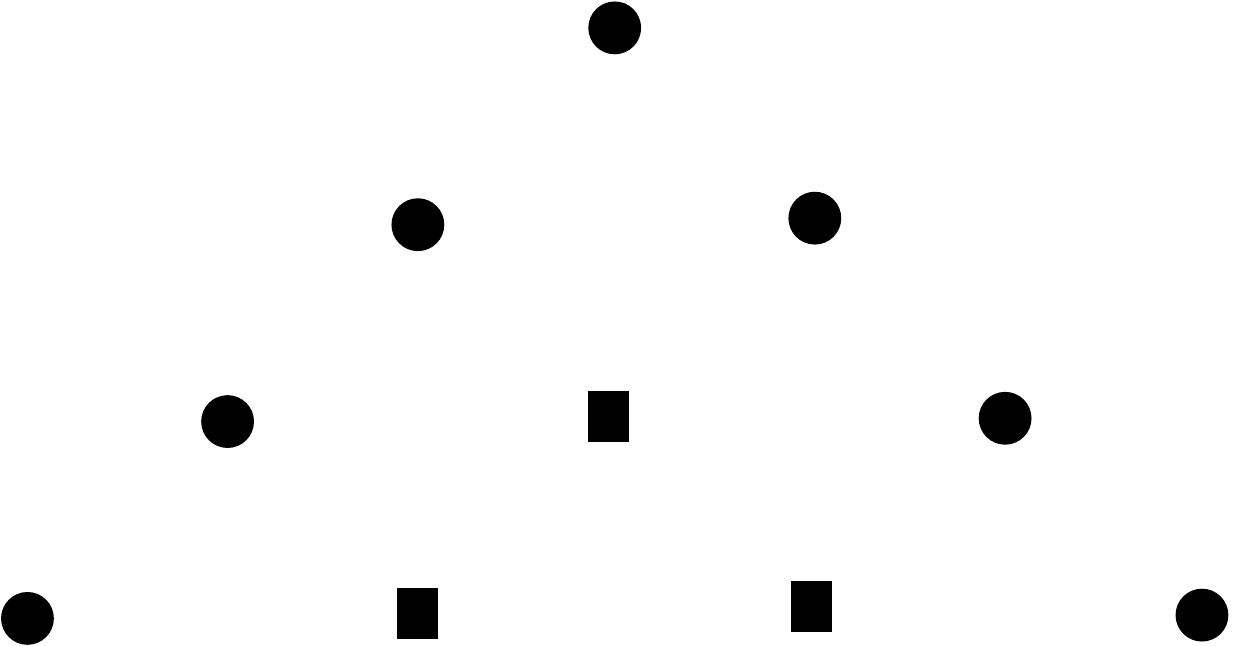}
    \end{center}
\caption{The positive roots of $su(5)$} 
\label{roots}
\end{figure} 
Take for example the case $n=5$. The system of positive roots is drawn in Fig.\ref{roots}. The low side of the triangle represents the simple roots,
\be
 \beta_1 \ =\ (1,0,0,0), \ \beta_2 \ =\ (0,1,0,0), \ \beta_3 \ =\ (0,0,1,0), \ \beta_4 \ =\ (0,0,0,1) \,,
\ee
and the summit of the triangle is the highest root, $\delta =  \beta_1 + \beta_2 + \beta_3 + \beta_4$. The boxes mark the roots $\alpha$ where the inner products $(\Lambda + \delta, \alpha)$ and $(\delta, \alpha)$ coincide, so that these roots do not contribute in the product \p{Weyl}. The circles mark the roots where $(\Lambda + \delta, \alpha) \neq (\delta, \alpha)$ and which give a contribution. We obtain
\be
\#(n=5, L,q)  &&=\ (L+1)\cdot \frac {L+2}2 \cdot  \frac {L+3}3 \cdot   \frac {2L+ q + 4}4 \cdot \frac {L+q+ 3}3 \cdot  \frac {L+q+2}2  \cdot (L+q+1) \ \nn
  && =\
\left( \begin{array}{c} L+3\\ 3 \end{array} \right) \left( \begin{array}{c} L+q + 3\\ 3 \end{array} \right)  \frac {2L+ q + 4}4  \,,
 \ee
which agrees with \p{dim-q}.

A generalization for an arbitrary $n$ is straightforward. Only the roots dwelling on the lateral sides of the triangle of positive roots contribute.   

Note that the same result for the multiplicity could be derived in a more elementary way without invoking group theory wisdom. The  tensor $A_{(\alpha_1 \ldots \alpha_L)|(\bar\beta_1 \ldots \bar\beta_{L+q})}$ would have
 \be
\lb{all}
\left( \begin{array}{c} L+n-1\\ n-1 \end{array} \right)  \left( \begin{array}{c} L+q + n-1\\ n-1 \end{array} \right)
\ee
independent components if no constraints were present. But the constraint of tracelessness imposes 
   \be
\lb{constr}
\left( \begin{array}{c} L+n-2\\ n-1 \end{array} \right)  \left( \begin{array}{c} L+q + n-2\\ n-1 \end{array} \right)
\ee
conditions. The difference of \p{all} and \p{constr} coincides with \p{dim-q}.

\section{Upper components of the superdoublets}\label{updoubletapp}
\setcounter{equation}0
\def\theequation{B.\arabic{equation}}
 The upper components of the supersymmetric doublets are obtained from the lower ones by the action of the supercharge $\hat Q = -i \chi^\alpha \dd/\dd w^\alpha$. Since the supercharges commute with $SU(n)$, the upper components furnish the same representation of $SU(n)$ as the lower ones.
 
As an illustration, let us take $n=4$ and consider the action of $\hat Q$ on a wave function $\Psi_-$ characterized by the parameters $L=2, L' = 1, F=1$ and decribed by the diagram
 \be 
\lb{Young-Psi-}
\ytableausetup{boxsize=0.4cm}
\ytableaushort
{} * [*(lbluecol)]{3,2,1} 
\ee
The corresponding wave function has the form
\be
\lb{Eqn-Psi-}
\Psi_- \ =\    \frac {A_{\alpha_1\alpha_2|\bar\beta|\gamma} \, w^{\alpha_1}w^{\alpha_2} \bar w^\beta \chi^\gamma}{{\cal X}^3}\,.
\ee
For its superpartner, we derive
\be
\Psi_+ =  i \hat Q \Psi_- \ =\ A_{\alpha_1\alpha_2|\bar\beta|\gamma} w^{\alpha_2} \bar w^\beta \chi^\gamma \,\frac {2\bar w^\delta w^\delta \chi^{\alpha_1} - 3 \chi^\delta \bar w^\delta w^{\alpha_1}}{{\cal X}^4}
\ee

This is a wave function \p{PsiFermMon} with $L = L' = F =2$. It satisfies the constraint $\hat {\cal C}_3 \Psi_+ = 0$ and hence, as was explained in the main text, is described by a hook
 \be
\lb{hook}
\ytableausetup{boxsize=0.4cm}
\ytableaushort
{} * [*(bluecol)]{2} * [*(pinkcol)]{1,1,1}
\ee
The irreducible representation to which $\Psi_+$ belongs is given by one of the terms in the decomposition
 \be
\lb{n4-example}
\ytableausetup{boxsize=0.4cm}
\ytableaushort
{}* [*(lbluecol)]{2,2,2} \otimes  \ytableaushort {}* [*(bluecol)]{2} * [*(pinkcol)]{1,1,1} \quad \ = \quad \ 
\ytableaushort {} * [*(lbluecol)]{2,2,2}  * [*(bluecol)]{4} *[*(pinkcol)]{3,3,3} \quad \oplus \quad 
\ytableaushort {} * [*(lbluecol)]{1,1,1}  * [*(bluecol)]{3} *[*(pinkcol)]{2,2}   \quad \oplus \quad 
\ytableaushort {} * [*(lbluecol)]{1,1,1}  * [*(bluecol)]{2} *[*(pinkcol)]{2,2,2}   \quad \oplus \quad 
\ytableaushort {}  * [*(bluecol)]{1} *[*(pinkcol)]{1,1}
\ee
 And we see that the {\it second} term of this decomposition exactly coincides with \p{Young-Psi-}!

It is not the first highest weight diagram, as was the case in Eq.\p{tensprodyoung},  because the  tracelessness conditions \p{albet-trace} and \p{betgam-trace} are now absent.
The second term in \p{n4-example} is obtained when one of the {\it red} ``fermion" boxes in the hook is glued down on the left to the large rectangle so that the left column of the rectangle disappears. Previously, we dropped this contribution due to the condition $\hat {\bar Q} \Psi_- = 0$ and its corollary \p{betgam-trace}. But\footnote{We are not discussing  zero modes here, which may appear in the sector $F = n-1$ and are singlets under supersymmetry.} $\hat{\bar Q}\Psi_+ \neq 0$. 
The described rule of thumb to draw the  Young diagram for the relevant multiplet of $\Psi_+$ --- {\it consider the product of a rectangle associated with $\bar w$ factors  and  a hook associated with $w$ and $\chi$ factors, take a fermion box in the $w\chi$ hook, attach it down on the left to the $\bar w$  rectangle and glue the rest of the hook on the right of the rectangle} --- also works in all other cases.

\vspace{0.3cm} 
Finally, let us prove the fact mentioned in the footnote  on p.~\pageref{albet-trace}.

\begin{thm}
Let 
\be 
\lb{Psiplus}
\Psi_+  = \hat{Q} \Psi_-  = \frac {1}{{\cal X}^M}\, B_{\alpha_1 \ldots \alpha_L| \bar\beta_1 \ldots \bar\beta_{L'}| \gamma_1 \ldots \gamma_F} w^{\alpha_1} \cdots w^{\alpha_L} \, 
 \bar w^{\beta_1} \cdots \bar w^{\beta_{L'}}\,\chi^{\gamma_1}\cdots \chi^{\gamma_F}\  \neq 0\,,
\ee
where $\Psi_-$ is an eigenfunction of the Hamiltonian \p{super-H} satisfying  $\hat {\bar Q} \Psi_- =  0$. Then the tensor~$B$ does not fulfil the $\alpha\beta$ tracelessness condition \p{albet-trace}.
\end{thm}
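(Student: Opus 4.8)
The plan is to argue by contradiction: assume that $B$ \emph{does} satisfy the $\alpha\beta$ tracelessness condition \p{albet-trace} and deduce $\Psi_-=0$, which contradicts the hypothesis $\Psi_+\neq 0$.

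First I would record three elementary properties of $\Psi_+=\hat Q\Psi_-$. (i) Since $\Psi_-$ obeys the constraints \p{superconstr}, and since $[\hat Q,\hat{\cal C}_1]=[\hat Q,\hat{\cal C}_2]=0$ while $\{\hat Q,\hat{\cal C}_3\}=-i\hat{\cal C}_1$, the function $\Psi_+$ obeys the same three constraints; in particular $\hat{\cal C}_3\Psi_+=0$, i.e. $w^\alpha\,\partial\Psi_+/\partial\chi^\alpha=0$. (ii) From $\{\hat Q,\hat{\bar Q}\}=\hat H$ and $\hat{\bar Q}\Psi_-=0$ one gets $\hat{\bar Q}\Psi_+=\hat{\bar Q}\hat Q\Psi_-=\hat H\Psi_-=E\Psi_-$. (iii) Taking the matrix element of $\hat H\Psi_-=E\Psi_-$ against $\Psi_-$ and using that $\hat Q,\hat{\bar Q}$ are mutually conjugate together with $\hat{\bar Q}\Psi_-=0$, one finds $E\,\|\Psi_-\|^2=\|\hat Q\Psi_-\|^2=\|\Psi_+\|^2>0$, so $E\neq 0$; this is the only place where $\Psi_+\neq0$ is used.

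The heart of the argument is the following. By the assumed $\alpha\beta$ tracelessness of $B$, the fermionic analogue of Theorem~\ref{trace} — the computation behind \p{E-LL'F}, in which the $\chi$'s are spectators for $\triangle=\partial_{w^\alpha}\partial_{\bar w^\alpha}$ — shows that $\Psi_+$ is an eigenfunction of the \emph{first} term of the Hamiltonian \p{super-H}: $-{\cal X}\triangle\,\Psi_+=E'\Psi_+$ for some number $E'$ (in fact $E'=(L+F)(L'-F+n-1)$, but its value is irrelevant). On the other hand, rewriting the \emph{second} term of \p{super-H} through $\hat{\bar Q}=-i{\cal X}\,\partial_{\chi^\beta}\partial_{\bar w^\beta}$ of \p{supercharges}, its action on $\Psi_+$ equals $-\bar w^\alpha\chi^\alpha\,\partial_{\chi^\beta}\partial_{\bar w^\beta}\Psi_+=-\frac{i}{\cal X}\,\bar w^\alpha\chi^\alpha\,\hat{\bar Q}\Psi_+=-\frac{iE}{\cal X}\,\bar w^\alpha\chi^\alpha\,\Psi_-$ by (ii). Imposing $\hat H\Psi_+=E\Psi_+$ then yields
\[
(E'-E)\,\Psi_+\ =\ \frac{iE}{\cal X}\,\bar w^\alpha\chi^\alpha\,\Psi_-\,.
\]
To close the argument I would apply $w^\delta\,\partial/\partial\chi^\delta\ (=i\hat{\cal C}_3)$ to both sides. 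On the left it gives $(E'-E)\,i\hat{\cal C}_3\Psi_+=0$ by (i), whatever the value of $E'$. On the right, pulling the commuting factors $1/{\cal X}$ and $\bar w^\alpha$ through and using the graded Leibniz rule together with $w^\delta\,\partial\Psi_-/\partial\chi^\delta=0$, one is left with $\frac{iE}{\cal X}\,\bar w^\alpha w^\alpha\,\Psi_-=iE\,\Psi_-$. Hence $iE\,\Psi_-=0$, and since $E\neq0$ this forces $\Psi_-=0$, the desired contradiction. Therefore $B$ cannot satisfy \p{albet-trace}.

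The argument is short once the right move is spotted; the only place demanding genuine care is the bookkeeping in steps (i)–(ii), namely verifying that $\Psi_+$ inherits the constraint $\hat{\cal C}_3\Psi_+=0$ and rewriting the cross term $-\bar w^\alpha\chi^\alpha\,\partial_{\chi^\beta}\partial_{\bar w^\beta}$ of \p{super-H} in terms of $\hat{\bar Q}$ without mishandling the ordering of ${\cal X}$ and the Grassmann variables. Conceptually, the cross term acts nontrivially precisely on states not annihilated by $\hat{\bar Q}$, and $\Psi_+$ is such a state since $\hat{\bar Q}\Psi_+=E\Psi_-\neq0$; an $\alpha\beta$-traceless $B$ would make $\Psi_+$ simultaneously an eigenstate of $-{\cal X}\triangle$, and these two facts collide through $\hat{\cal C}_3\Psi_+=0$.
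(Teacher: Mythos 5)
Your proof is correct, and it closes the argument by a genuinely different route than the paper. Both proofs share the opening move: assume $B$ is $\alpha\beta$-traceless, invoke the Theorem~\ref{trace}-type computation (fermions as spectators) to conclude that $-{\cal X}\triangle\,\Psi_+=E'\Psi_+$, and hence that the cross term of \p{super-H} must also act diagonally on $\Psi_+$. From there the paper takes a more computational path: it writes the cross term's action as $a\Psi_+$, proves $a\neq 0$ by decomposing $\Phi_+=\Phi_+^{(1)}+(\bar w^\alpha\chi^\alpha)\Phi_+^{(2)}$ into $\beta\gamma$-traceless pieces, deduces $\Psi_+\propto\bar w^\alpha\chi^\alpha$ (i.e.\ $\Phi_+^{(1)}=0$), and finally derives a contradiction by inspecting the explicit action of $\hat Q$ on $\Psi_-$, whose numerator-derivative piece contributes a nonzero $\beta\gamma$-traceless part to $\Phi_+^{(1)}$. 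You instead exploit the superalgebra: $\hat{\bar Q}\Psi_+=\hat H\Psi_-=E\Psi_-$ lets you write the cross term's action as $-\tfrac{iE}{\cal X}\,\bar w^\alpha\chi^\alpha\,\Psi_-$, so the eigenvalue equation becomes $(E'-E)\Psi_+=\tfrac{iE}{\cal X}\,\bar w^\alpha\chi^\alpha\,\Psi_-$, and hitting both sides with $i\hat{\cal C}_3=w^\delta\partial/\partial\chi^\delta$ (which annihilates $\Psi_+$ by the constraint algebra but extracts ${\cal X}\Psi_-$ from the right-hand side) forces $E\Psi_-=0$, hence $\Psi_-=0$. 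This bypasses the paper's Lemma entirely — you never need to decide whether the cross term's eigenvalue vanishes, since the $\hat{\cal C}_3$ projection kills the left-hand side in either case — and it avoids any explicit tensor bookkeeping. The only inputs beyond the shared setup are the constraint algebra $\{\hat Q,\hat{\cal C}_3\}=-i\hat{\cal C}_1$ (to get $\hat{\cal C}_3\Psi_+=0$) and $E\neq 0$, which you correctly obtain from positive-definiteness of the norm and $\|\Psi_+\|^2>0$; the paper uses the same fact implicitly when asserting $\hat{\bar Q}\Psi_+\propto\Psi_-\neq 0$. What the paper's longer route buys is the explicit structural statement $\Phi_+^{(1)}\neq 0$ and the decomposition used in its Young-diagram discussion; what yours buys is brevity and a cleaner use of the supersymmetry algebra.
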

\begin{proof}
Note first that, for the function defined in \p{Psiplus}, 
\be
\lb{neq-0} 
\hat {\bar Q}\Psi_+  \ \propto  \  \Psi_- \neq 0
\ee
and $F \geq 1$.

Suppose that $B$ is $\alpha\beta$ traceless. Then, in virtue of Theorem 1, \p{Psiplus}  
 would be an eigenfunction of the first term in~\p{super-H} and therefore of the second term as well, meaning that
\be\label{2termeigenprob}
\bar w^\alpha \chi^\alpha  \frac{\partial^2 \Psi_+}{\partial \chi^\beta \partial \bar w^\beta} =a\, \Psi_+\,.
\ee
\begin{lem}
$a$ must be different from zero.
\end{lem}
\begin{proof}
If $a$ is zero, then $\frac{\partial^2 \Psi_+}{\partial \chi^\beta \partial \bar w^\beta}$ must be proportional to  $\bar w^\alpha \chi^\alpha$. To show that it is not possible, we represent $\Psi_+$ as 
$\Psi_+\ =\ \frac {\Phi_+}{{\cal X}^{M}}$ to find
\be
\frac{\partial^2 \Psi_+}{\partial \chi^\beta \partial \bar w^\beta}=-{M\over \cal X}\,i\hat{\cal C}_3 \Psi_++{1\over {\cal X}^{M}} \frac{\partial^2 \Phi_+}{\partial \chi^\beta \partial \bar w^\beta}={1\over {\cal X}^{M}} \frac{\partial^2 \Phi_+}{\partial \chi^\beta \partial \bar w^\beta}\,.
\ee
If the tensor $B$ entering $\Phi_+$ were $\beta\gamma$ traceless, the wave function $\Psi_+$ would be annihilated by the action of $\hat {\bar Q} \propto \frac{\dd^2}{\partial \chi^\beta \partial \bar w^\beta}$ in contradiction with \p{neq-0}. However, we may cast~$\Phi_+$ in the form
\be
\lb{Fi1+Fi2}
\Phi_+ \ = \ B^{(1)}_{\alpha_1 \ldots \alpha_L| \bar\beta_1 \ldots \bar\beta_{L'}| \gamma_1 \ldots \gamma_F} w^{\alpha_1} \cdots w^{\alpha_L} \, 
 \bar w^{\beta_1} \cdots \bar w^{\beta_{L'}}\,\chi^{\gamma_1}\cdots \chi^{\gamma_F} \nn
 + \ \left(\bar w^\alpha \chi^\alpha \right)\,B^{(2)}_{\alpha_1 \ldots \alpha_L| \bar\beta_1 \ldots \bar\beta_{L'-1}| \gamma_1 \ldots \gamma_{F-1}} w^{\alpha_1} \cdots w^{\alpha_L} \, 
 \bar w^{\beta_1} \cdots \bar w^{\beta_{L'-1}}\,\chi^{\gamma_1}\cdots \chi^{\gamma_{F-1}}  \nn
  \equiv\ \Phi_+^{(1)}+ \left(\bar w^\alpha \chi^\alpha \right)\,\Phi_+^{(2)}\,,
\ee
where both $B^{(1)}$ and $B^{(2)}$ are $\beta\gamma$ traceless. 

We act on \p{Fi1+Fi2} with the operator $\frac{\dd^2}{\partial \chi^\alpha \bar w^\alpha}$. The first term does not contribute due to the  $\beta\gamma$ tracelessness of $B^{(1)}$, and it is easy to derive 
\be
\frac{\partial^2 \Phi_+}{\partial \chi^\beta \partial \bar w^\beta}\ =\ \left(n+L'-F\right)\,\Phi_+^{(2)}\,.
\ee
Since $\Phi_+^{(2)}$ does not contain a factor of  $\bar w^\alpha \chi^\alpha$, one cannot have $\frac{\partial^2 \Psi_+}{\partial \chi^\beta \partial \bar w^\beta} \propto 
\bar w^\alpha \chi^\alpha 
$. Hence~${a\neq 0}$.
\end{proof}

 It then follows from~(\ref{2termeigenprob}) that $\Psi_+$ is proportional to $\bar w^\alpha \chi^\alpha$, i.e.
\be\label{Psiplusform}
\Psi_+\ =\ \frac {\Phi_+^{(2)}}{{\cal X}^{M}}\,\left(\bar w^\alpha \chi^\alpha \right)\,,
\ee
and $\Phi^{(1)}_+ = 0$.

On the other hand, $\Psi_+$ is obtained by acting with $\hat Q$ on a state $\Psi_-$ with a  wave function~\p{PsiFermMon} including a tensor\footnote{Or just $A_{\alpha_1 \ldots \alpha_{L+1}| \bar\beta_1 \ldots \bar\beta_{L'}}$ if $F=1$.} 
$$A_{\alpha_1 \ldots \alpha_{L+1}| \bar\beta_1 \ldots \bar\beta_{L'}| \gamma_1 \ldots \gamma_{F-1}}  $$
satisfying both the $\alpha\beta$ and $\beta\gamma$ tracelessness conditions \p{albet-trace}, \p{betgam-trace}
As a result of such action, one can get the factor $\left(\bar w^\alpha \chi^\alpha \right)$, as in the formula above, only when the derivative ${\dd \over \dd w^\alpha}$ acts on the $\cal X$-factors in the denominator of~$\Psi_-$. Indeed, when the derivative acts on the numerator, one obtains a structure including the tensor
\be  
B_{\alpha_1 \ldots \alpha_L| \bar\beta_1 \ldots \bar\beta_{L'}| \gamma_1 \ldots \gamma_{F-1}\alpha_{L+1}}  \ =\ 
A_{\alpha_1 \ldots \alpha_{L+1}| \bar\beta_1 \ldots \bar\beta_{L'}| \gamma_1 \ldots \gamma_{F-1}}\,.
\ee
This tensor is $\beta\gamma$ traceless due to the $\alpha\beta$ and $\beta\gamma$ tracelessness of $A$ and gives a nonzero contribution to $\Phi_+^{(1)}$. We are thus led to a contradiction.

\end{proof}

\end{document}